\theoremstyle{definition}
\newtheorem{lem}{Lemma}[]
\newtheorem{thm}{Theorem}[]
\newtheorem{prop}{Proposition}[]
\theoremstyle{definition}
\theoremstyle{remark}
\newtheorem{remark}{Remark}
\newcommand\peqref[1]{(\mathcal{P}\ref{#1})}
\newcommand\fref[1]{Fig. \ref{#1}}      
\newcounter{prb}
\definecolor{orange}{RGB}{255,107,0}
\definecolor{green}{RGB}{0,160,20}
\newcommand{\onesn}{\ensuremath{\bm{1}_{n,\phi}}}
\begin{document}

		\title{ DRL-Assisted Resource Allocation for NOMA-MEC Offloading with Hybrid SIC  }  
	\author{Haodong Li, Fang Fang, Zhiguo Ding
		\thanks{H. Li and Z. Ding are with the Department of Electrical and Electronic Engineering, The University of Manchester, M13 9PL, UK. F. Fang is with the Department of Engineering, Durham University, Durham DH1 3LE, UK (e-mail: haodong.li@manchester.ac.uk, fang.fang@durham.ac.uk, zhiguo.ding@manchester.ac.uk).}}
	\maketitle
\begin{abstract}
	Multi-access edge computing (MEC) and non-orthogonal multiple access (NOMA) have been regarded as promising technologies to improve computation capability and offloading efficiency of the mobile devices in the sixth generation (6G) mobile system. This paper mainly focuses on the hybrid NOMA-MEC system, where multiple users are first grouped into pairs, and users in each pair offload their tasks simultaneously by NOMA, and then a
	dedicated time duration is scheduled to the more delay-tolerable user for uploading the remaining data by orthogonal
	multiple access (OMA). For the conventional NOMA uplink transmission, successive interference cancellation (SIC) is
	applied to decode the superposed signals successively according to the channel state information (CSI) or the quality
	of service (QoS) requirement. In this work, we integrate the hybrid SIC scheme which dynamically adapts the SIC
	decoding order among all NOMA groups. To solve the user grouping problem, a deep reinforcement learning
	(DRL) based algorithm is proposed to obtain a close-to-optimal user grouping policy. Moreover, we optimally
	minimize the offloading energy consumption by obtaining the closed-form solution to the resource allocation problem.
	Simulation results show that the proposed algorithm converges fast, and the NOMA-MEC scheme outperforms the existing orthogonal multiple access (OMA) scheme.

\end{abstract}
\begin{IEEEkeywords}
	deep reinforcement learning (DRL); multi-access edge computing (MEC); resource allocation; sixth-generation (6G); user grouping
\end{IEEEkeywords}

\section{Introduction}

With fifth-generation (5G) networks being available now, the sixth-generation (6G) wireless network is currently under research, which is expected to provide superior performance to satisfy growing demands of mobile equipment, such as latency sensitive, energy hungry and computationally intensive services and applications \cite{MECNdu2020,9174795}. For example, the Internet of Things (IoT) networks are being developed rapidly, where massive numbers of nodes are supposed to be connected together, and IoT nodes can not only communicate with each others but also process acquired data \cite{ZHAO2020101184,9247446,9133107}. However, such IoT and many other terminal devices are constrained by the battery life and computational capability, and thereby these devices cannot support computationally intensive tasks. A conventional approach to improve the computation capability of mobile devices is mobile cloud Computing (MCC), where computation intensive tasks are offloaded to the central cloud servers for data processing \cite{DinhMCCsvy2013,8030322}. However, MCC will cause significant delays due to the long propagation distances. To address the offloading delay issue, especially for delay sensitive applications in the future 6G networks, multi-access edge computing (MEC) has been emerged as a decentralized structure to provide the computation capability close to the terminal devices, which are generally implemented at the base stations to provide cloud-like task processing service. \cite{8951269,9051691,8016573,8030322}.

From the communication perspective, non-orthogonal multiple access (NOMA) has been recognized as a promising technology to improve the spectral efficiency and massive connections, which enables multiple users to utilize the same resource block such as time and frequency for transmissions\cite{9358097,8972353}. Take power domain NOMA as an example, the signals of multiple users are multiplexed in power domain by the superposition coding, and at the receiver side, successive interference cancellation (SIC) is adopted to remove the multiple access interference successively\cite{8792153}. Hence, integrating NOMA with MEC can potentially improve the service quality of MEC including low transmission latency and massive connections compared to the conventional orthogonal multiple access (OMA).
\subsection{Related Works}
The integration of NOMA and MEC has been well studied so far, and researchers have proposed various approaches on optimal resource allocation to minimize the offloading delay and energy consumption. In \cite{9179779}, the author minimized the offloading latency for a multi-user scenario, in which the power allocation and task partition ratio were jointly optimized. The partial offloading policy can determine the amount of data to be offloaded to the server, and the remainder is processed locally. The author of \cite{8794550} proposed a iterative two-user NOMA scheme to minimize the offloading latency, in which two users offload their tasks simultaneously by NOMA. Since one of the users suffers performance degradation introduced by NOMA, instead of forcing two users to complete offloading at the same time, the remaining data is offloaded in together with the next user during the following time slot. Moreover, many existing works investigate the energy minimization of NOMA-MEC networks. For example, the joint optimization of central processing unit (CPU) frequency, task partition ratio and power allocation for a NOMA-MEC heterogeneous network were considered in \cite{8491299,9272879}. In \cite{8537962}, the author considered a multi-antenna NOMA-MEC network, and presented an approach to minimize the weighted sum energy consumption by jointly optimizing the computation and communication resource.

In addition to the existing works on pure NOMA schemes as aforementioned, a few works also combine NOMA and OMA in together, which is denominated as hybrid NOMA \cite{8673584}. In this paper, the author proposed a two-user hybrid NOMA scenario, in which one user is less delay tolerable than the other. The two users offload during the first time slot by NOMA, and the user with longer deadline offloads the remaining data during an additional time duration by OMA. This configuration presents significant benefits, which outperforms both OMA and pure NOMA in terms of energy consumption since the energy can be saved for the delay tolerable user instead of finishing offloading at the same time in pure NOMA networks. In \cite{9079198,LI2020241}, the hybrid NOMA scheme is extended to multi-user scenarios, in which a two-to-one matching algorithm is utilized to pair every two users into a group, and each group offload through a sub-carrier.

For the resource allocation in NOMA-MEC networks, user grouping is a non-convex problem, which is solved by exhaustive search or applying matching theory. Deep reinforcement learning (DRL) is recognized as a novel approach to this problem, which is a powerful tool to solve the real-time decision-making tasks, and only handful papers utilized it for user grouping and sub-channel assignment such as \cite{9044821,8790780} which output the user grouping policy for uplink and downlink NOMA networks respectively.

Moreover, in most of the NOMA works, the SIC decoding order is prefixed, which can either be determined by the channel state information (CSI) or the quality of service (QoS) requirements of users \cite{6868214,7273963,9146345}. A recent work \cite{9151196} has proposed a hybrid SIC scheme to switch the SIC decoding order dynamically, which has shown significant performance improvement in uplink NOMA networks. The author of \cite{9151208} integrated the hybrid SIC scheme with an MEC network to serve two uplink users, and the results reveals that the hybrid SIC outperforms the QoS based decoding order.

\subsection{Motivation and Contributions}
Motivated by the existing research on MEC-NOMA, in this paper, we investigate the energy minimization for the uplink transmission in multi-user hybrid NOMA-MEC networks with hybrid SIC. More specifically, a DRL based framework is proposed to generate a user grouping policy, and  the power allocation, time allocation and task partition assignment are jointly optimized for each group. The DRL framework collects experience data including CSI, deadlines, energy consumption as labeled data to train the neural networks (NNs). The main contributions of this paper are summarized as follows:
\begin{itemize}
	\item A hybrid NOMA-MEC network is proposed, in which an MEC server is deployed at the base station to serve multiple users. All users are divided into pairs, and each pair is assigned into one sub-channel. The users in each group adopt NOMA transmission with the hybrid SIC scheme in the first time duration, and the user with longer deadline transmits the remaining data by OMA in the following time duration. We propose a DRL-assisted user grouping framework with joint power allocation, time scheduling, and task partition assignment to minimize the offloading energy consumption under transmission latency and offloading data amount constraints.
	\item By assuming that the user grouping policy is given, the energy minimization problem for each group is non-convex due to the multiplications of variables and a 0-1 indicator function, which indicates two cases of decoding orders. The solution to the original problem can be obtained by solving each case separately. A multilevel programming method is proposed, where the energy minimization problem is decomposed into three sub-problems including power allocation, time scheduling, and task partition assignment. By carefully analyzing the convexity and monotonicity of each sub-problem, the solutions to all three sub-problems are obtained optimally in closed-form.The solution to the energy minimization problem for each case can be determined optimally by adopting the decisions successively from the lower level to the higher level (i.e., from the optimal task partition assignment to the optimal power allocation). Therefore, the solution to the original problem can be obtained by comparing the numerical results of those two cases and selecting the optimal solution with lower energy consumption.
	\item A DRL framework for user grouping is designed based on a deep Q-learning algorithm. We provide a training algorithm for the NN to learn the experiences based on the channel condition and delay tolerance of each user during a period of slotted time, and the user grouping policy can be
		learned gradually at the base station by maximizing the negative of the total offloading energy consumption. 
	\item Simulation results are provided to illustrate the convergence speed and the performance of this user grouping policy by comparing with random user grouping policy. Moreover, compared with the OMA-MEC scheme, our proposed NOME-MEC scheme can achieve superior performance with much lower energy consumption.
\end{itemize}

\subsection{Organizations}
The rest of the paper is structured as follows. The system model and the formulated energy minimization problem for our proposed NOMA-MEC scheme are described in Section \ref{sys}. Section \ref{opt}, it presents the optimal solution to the energy minimization problem. Following that, the DRL based user grouping algorithm is introduced in Section \ref{alg}. Finally, the simulation results of the convergence and average performance for the proposed scheme are shown in Section \ref{res}, and Section \ref{con} concludes this paper.

\section{System Model and Problem Formulation} \label{sys}
\subsection{System Model}
In this paper, we consider a NOMA-MEC network, where a base station is equipped with an MEC server to serve $K$ resource-constrained users. During one offloading cycle, each user offloads its task to the MEC server and then obtains the results which processed at the MEC server. Generally, the data size of the computation results is relatively smaller than the offloaded data in practical, thus, the time for downloading the results can be omitted \cite{8537962}. Moreover, since the MEC server has much higher computation capability than mobile devices, the data processing time at the MEC server can be ignored compared to the offloading time \cite{9179779}. Therefore, in this work, the total offloading delay is approximated to the time consumption of data uploading to base station.

We assume that all $K$ users are divided into $\Phi$ groups to transmit signals at different sub-channels, and each group $\phi$ contains two users such that $K=2\Phi$. In each group, we denote the user with short deadline by $U_{m,\phi}$, and the user with relevantly longer deadline by $U_{n,\phi}$, which indicates $\tau_{m,\phi} \leq  \tau_{n,\phi}$, where $\tau_{i,\phi}$ is the latency requirement of $U_{i,\phi},  \forall i \in \{m,n\}$ in group $\phi$. Because $U_{m,\phi}$ has a tighter deadline, it is assumed that the whole duration $\tau_{m,\phi}$ will be used up, which means that the offloading time $t_{m,\phi}=\tau_{m,\phi}$.

In this system model, we adopt the block channel model which indicates that the channel condition remains static during each time slot. With the small scale fading, the channel gain of a user in group $\phi$ can be expressed as

\begin{equation}
	H_{i,\phi} = \tilde{h}_{i,\phi}{{d_{i,\phi}}^{-\frac{\alpha}{2}}}, \quad \forall i \in \{m,n\} ,  \forall \phi,
\end{equation}
where $\tilde{h}_{i,\phi} \sim \mathcal{CN}(0,1) $ is the Rayleigh fading coefficient, $d_{i,\phi}$ is the distance between $U_{i,\phi}$ to the base station, and $\alpha$ is the pass loss exponent. The channel gain is normalized by the addictive white Gaussian noise (AWGN) power with zero-mean and $\sigma^2$ variance, which can be written as 

\begin{equation}
	h_{i,\phi}=\frac{ \abs{H_{i,\phi}} ^2}{\sigma^2}, \quad \forall i \in \{m,n\} ,  \forall \phi.
\end{equation}

As shown in \fref{sysmodel}, since those two users have different delay tolerance, it is natural to consider that the $U_{n,\phi}$ is unnecessary to finish offloading within $\tau_{m,\phi}$ via NOMA transmission, and potentially to save energy if $U_{n,\phi}$ can utilize the spare time $\tau_{n,\phi} - \tau_{m,\phi}$. Hence, our proposed hybrid NOMA scheme enables $U_{n,\phi}$ to offload part of its data when $U_{m,\phi}$ offloading its task during $\tau_{m,\phi}$, an additional time duration $t_{r,\phi}$ is scheduled within each time slot to transmit $U_{n,\phi}$'s remaining data. The task transmission for $U_{m,\phi}$ should be completed within $\tau_{n,\phi}$, i.e.,

\begin{equation}
	t_{r,\phi} \leq \tau_{n,\phi} - \tau_{m,\phi} , \forall \phi.
\end{equation}

As aforementioned, the users in each group will occupy the same sub-channel to upload their data to the base station simultaneously via NOMA. In NOMA uplink transmission, SIC is adopted at the base station to decode the superposed signal. Conventionally, the SIC decoding order is based on either user's CSI or the QoS requirement \cite{9151196}. For the QoS based case, to guarantee $U_{m,\phi}$ can offload its data by $\tau_{m,\phi}$, $U_{n,\phi}$ is set to be decoded first, and the data rate is

\begin{equation} \label{Sys:QoSCon}
	R_{n,\phi} = B\ln \left(1+\frac{P_{n,\phi} \abs{h_{n,\phi}}^2}{P_{m,\phi} \abs{h_{m,\phi}}^2+1}\right),
\end{equation}
where $B$ is the bandwidth of each sub-channel. $P_{n,\phi}$ and $P_{m,\phi}$ are the transmission power of $U_{n,\phi}$ and $U_{m,\phi}$ during NOMA transmission respectively. Based on the NOMA principle, the signal of $U_{m,\phi}$ can then be decoded if \eqref{Sys:QoSCon} is satisfied, and the data rate for $U_{m,\phi}$ can be written as

\begin{equation} 
	R_{m,\phi} = B\ln \left(1+P_{m,\phi}\abs{h_{m,\phi}}^2\right). \label{Sys:OMAUM}
\end{equation}
If $U_{n,\phi}$ is decoded first according to the CSI principle, the achievable rate is same as \eqref{Sys:QoSCon} since $U_{n,\phi}$ treat the signal of $U_{m,\phi}$ as noise power. In contrast, $U_{m,\phi}$ can be decoded first if the following condition holds:

\begin{equation} \label{Sys:QoSCon2}
	R_{m,\phi} \leq B\ln \left(1+\frac{P_{m,\phi}\abs{h_{m,\phi}}^2}{P_{n,\phi}\abs{h_{n,\phi}}^2+1}\right).
\end{equation}
Then the data rate of $U_{n,\phi}$ can be obtained by removing the information of $U_{m,\phi}$, which is

\begin{equation} 
	R_{n,\phi} = B\ln \left(1+P_{n,\phi}\abs{h_{n,\phi}}^2\right). \label{Sys:QoS}
\end{equation}
If the same power is allocated to $U_{n,\phi}$ for both QoS and CSI scheme, it is evident that the achievable rate in \eqref{Sys:QoS} is higher than that in \eqref{Sys:QoSCon}, and the decoding order in \eqref{Sys:QoS} is preferred in  this case. However, since the constraint \eqref{Sys:QoSCon2} cannot be always satisfied, the system has to dynamically change the decoding order accordingly to achieve better performance, which motivated us to utilize the hybrid SIC scheme.

In addition, during $t_{r,\phi}$, $U_{n,\phi}$ adopts OMA transmission, and the data rate can be expressed as
\begin{equation}
	R_{r,\phi} = B\ln \left( 1+ P_{r,\phi}\abs{h_{n,\phi}}^2\right),
\end{equation}
where $P_{r,\phi}$ represents the transmission power of $U_{n,\phi}$ during the second time duration $t_{n,\phi}$.
\begin{figure}[t]
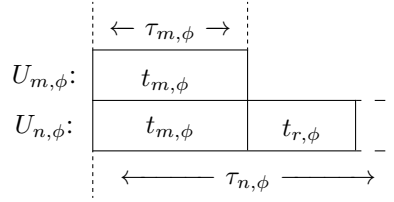

		\centering
		\begin{tabular}{c c c c c c}
			\multicolumn{1}{c}{}&\multicolumn{2}{;{1.5pt/1.5pt} c;{1.5pt/1.5pt}}{\leftarrowfill \  $\tau_{m,\phi}$ \rightarrowfill} \\
			\cline{2-3}
			\multicolumn{1}{c}{$U_{m,\phi}$:} & \multicolumn{2}{|m{1.5cm}|}{\centering $t_{m,\phi}$} & \multicolumn{1}{c}{} 
			& \multicolumn{1}{c}{}\\
			\cline{2-4} \cdashline{5-5}
			\multicolumn{1}{c}{$U_{n,\phi}$:} & \multicolumn{2}{|c|}{$t_{m,\phi}$}&\multicolumn{1}{m{1cm}|}{\centering $t_{r,\phi}$}&\multicolumn{1}{m{0.2cm};{5pt/1pt}}{}\\
			\cline{2-4} \cdashline{5-5}
			\multicolumn{1}{c}{}&\multicolumn{4}{;{1.5pt/1.5pt} c;{1.5pt/1.5pt}}{$\xleftarrow{\quad\quad\quad}$  $\tau_{n,\phi}$  $\xrightarrow{\quad\quad\quad}$}\\	
		\end{tabular}
	\captionsetup{justification=centering,margin=2cm}	
		\caption{System model.}
	\label{sysmodel}
\end{figure}
\par
In this work, the data length of each task is denoted by $L$, which is assumed to be bitwise independent, and we propose a partial offloading scheme in which each task can be processed locally and remotely in parallel. An offloading partition assignment coefficient $\beta_{\phi} \in [ 0,1]$ is introduced, which indicates how much amount of data is offloaded to the MEC server, and the rest can be executed by the local device in parallel. Thus, for each task, the amount of data for offloading to the server is $\beta_{\phi} L$ and $(1-\beta_{\phi})L$ is the data processed locally.
\par
$U_{n,\phi}$ can take the advantage of local computing by executing $(1-\beta_{\phi})L$ data locally during the scheduled NOMA and OMA time duration $t_{m,\phi} + t_{r,\phi} $. Therefore, the energy consumption for $U_{n,\phi}$'s local execution, which is denoted by $E^{loc}_{n,\phi}$, can be expressed as

\begin{equation} \label{local_cmp_E}	
	E^{loc}_{n,\phi} = \frac{\kappa_{0}\left[C(1-\beta_{\phi})L\right]^{3}}{\left(t_{m,\phi}+t_{r,\phi}\right)^{2}},
\end{equation}
where $\kappa_0$ denotes the coefficient related to the mobile device's processor and $C$ is the number of CPU cycles required for computing each bit. 

The total energy consumed by $U_{n,\phi}$ per task involves three parts, including the energy consumed by local computing, and transmission during NOMA and OMA offloading. The power for offloading is scheduled separately during these scheduled two time duration according to the hybrid SIC scheme, and thereby the offloading energy consumption $	E^{off}_{n,\phi}$ can be expressed as

\begin{equation}
	E^{off}_{n,\phi} = t_{m,\phi}P_{n,\phi} + t_{r,\phi}P_{r,\phi}.
\end{equation}
Hence, the total energy consumption can be expressed as

\begin{equation} \label{eng}
	E^{tot}_{\phi}=E^{loc}_{n,\phi}+E^{off}_{n,\phi}.
\end{equation}
\subsection{Problem Formulation}
We assume that the resource allocation of $U_{m,\phi}$ is given as a constant in each group since $U_{m,\phi}$ is treated as the primary user whose requirement need to be guaranteed in priority, and we only focus on the energy minimization for $U_{n,\phi}$ during both NOMA and OMA duration. Given the user grouping policy which will be solved in Section IV, the energy minimization problem for each pair can be formulated as

\begin{subequations}\refstepcounter{prb} \label{P1}
	\begin{align}
		\peqref{P1}: \quad
		\min_{\substack{P_{n,\phi},P_{r,\phi}\\ \mathrm{t_{r,\phi}},\mathrm{\beta_{\phi}}}}& \quad &&
		\frac{\kappa_{0}\left[C(1-\beta_{\phi})L\right]^{3}}{\left(\tau_{m,\phi}+t_{r,\phi}\right)^{2}} 
		+ \tau_{m,\phi}P_{n,\phi} + t_{r,\phi} P_{r,\phi}   \\
		\textrm{s.t.}&\quad &&  
		\tau_{m,\phi} R^{H}_{n,\phi}+t_{r,\phi} B\ln \left( 1+ P_{r,\phi}\abs{h_{n,\phi}}^2\right) \geq \beta_{\phi} L \label{P1C1} \\
		&&& \tau_{m,\phi} B \ln \left(1+ \frac{P_{m,\phi}\abs{h_{m,\phi}}^2}{P_{n,\phi} \abs{h_{n,\phi}}^{2}+1} \right) \geq \onesn L\label{P1C2} \\
		&&& P_{n,\phi} \geq 0, P_{r,\phi} \geq 0 \label{P1C3} \\ 
		&&& 0\leq t_{r,\phi}\leq \tau_{n,\phi}-\tau_{m,\phi} \label{P1C4} \\
		&&& 0\leq \beta_{\phi} \leq 1, \label{P1C5}
	\end{align}
\end{subequations}
where $R^{H}_{n,\phi}=\onesn B\ln\left(1+P_{n,\phi} \abs{h_{n,\phi}}^{2}\right) + (1-\onesn)B\ln\left(1+\frac{P_{n,\phi}\abs{h_{n,\phi}}^2}{P_{m,\phi} \abs{h_{m,\phi}}^2+1}\right) $. $\onesn$ is the indicator function. When $\onesn = 1$, $U_{m,\phi}$ is decoded first and vice verse.  Constraint \eqref{P1C1} and \eqref{P1C2} ensure all the users should complete offloading the designated amount of data within the given deadline. The constraint \eqref{P1C4} limits the additionally scheduled time slot should not beyond $U_{n,\phi}$'s delay tolerance.  Constraints \eqref{P1C3} \eqref{P1C5} set the feasible range of the transmission power and offloading coefficient.
\par
The problem $\peqref{P1}$ is non-convex due to the multiplication of several variables. Therefore, in the following section, we propose a multilevel programming algorithm to address the energy minimization problem optimally by obtaining the closed-form solution.

\section{Energy minimization for NOMA-MEC with Hybrid SIC Scheme} \label{opt}
In this section, a multilevel programming method is introduced to decompose the problem $\peqref{P1}$ into three sub-problems, i.e., power allocation, time slot scheduling and task assignment, which can be solved optimally by obtaining the closed-form solution. The optimal solution to the original problem $\peqref{P1}$ can thereby be found by solving those three sub-problems successively, which are provided in the below subsections.

\subsection{Power Allocation}
Let $t_{r,\phi}$ and $\beta_{\phi} $ be fixed, the problem $\peqref{P1}$ is regarded as a power allocation problem which can be rewritten as

\begin{subequations}\refstepcounter{prb} \label{SP1}
	\begin{align}
		\peqref{SP1}: \quad
		\min_{\substack{P_{n,\phi},P_{r,\phi}}}& \quad &&
		\frac{\kappa_{0}\left[C(1-\beta_{\phi})L\right]^{3}}{\left(\tau_{m,\phi}+t_{r,\phi}\right)^{2}} 
		+ \tau_{m,\phi}P_{n,\phi} + t_{r,\phi} P_{r,\phi}   \\
		\textrm{s.t.}&\quad &&  
		\tau_{m,\phi} R^{H}_{n,\phi}+t_{r,\phi} B\ln \left( 1+ P_{r,\phi}\abs{h_{n,\phi}}^2\right) \geq \beta_{\phi} L \label{P2C1} \\
		&&& \tau_{m,\phi} B \ln \left(1+ \frac{P_{m,\phi}\abs{h_{m,\phi}}^2}{P_{n,\phi} \abs{h_{n,\phi}}^{2}+1} \right) \geq \onesn L\label{P2C2} \\
		&&& P_{n,\phi} \geq 0, P_{r,\phi} \geq 0 \label{P2C3} 
	\end{align}
\end{subequations}
Since there exists an indicator function, $\peqref{SP1}$ is solved in two different cases, i.e., when $\onesn = 1$ and when $\onesn = 0$. The following theorem provides the optimal solution of both cases.

\begin{thm}
	The optimal power allocation to $\peqref{SP1}$ is given by the following two cases according to the indicator function:
	\begin{enumerate}
		\item For $\onesn = 1$, $U_{m,\phi}$ is decoded first, and the power allocation for this decoding order is presented as follows:
		\begin{enumerate}
		
			\item	When $P_{n,\phi}\neq 0$ and $P_{r,\phi}\neq 0$, $U_{n,\phi}$ offloads in both time duration, which is termed as hybrid NOMA, and the power allocation is given in the following two cases:
				\begin{enumerate}
					\item If $P_{m,\phi} > \abs{h_{m,\phi}}^{-2}e^{\frac{\beta_{\phi} L }{B \left( \tau_{m,\phi}+t_{r,\phi} \right) } }\left(e^{\frac{L}{B\tau_{m,\phi} }} -1\right)$,
						\begin{equation} \label{OptP1}
							P_{n,\phi}^{*}=P_{r,\phi}^{*}=\abs{h_{n,\phi}}^{-2} \left( e^{\frac{\beta_{\phi} L }{B \left( \tau_{m,\phi}+t_{r,\phi} \right) } } -1\right).
						\end{equation}
					\item If $ \abs{h_{m,\phi}}^{-2}\left(e^{\frac{L}{B\tau_{m,\phi} }} -1\right)    \leq P_{m,\phi} \leq \abs{h_{m,\phi}}^{-2}e^{\frac{\beta_{\phi}L}{B\tau_{m,\phi}}}\left(e^{\frac{L}{B\tau_{m,\phi}}}-1 \right)$,
						\begin{subequations}\label{OptP2}				
					\begin{align}				
						&	P_{n,\phi}^{*} = \abs{h_{n,\phi}}^{-2}\left[P_{m,\phi}\abs{h_{m,\phi}}^{2} \left(e^{\frac{L}{B\tau_{m,\phi}}}-1\right)^{-1}-1  \right],\\
						& 	P_{r,\phi}^{*} = \abs{h_{n,\phi}}^{-2} \left\{ e^{\frac{\beta_{\phi} L}{B t_{r,\phi}} - \frac{\tau_{m,\phi}}{t_{r,\phi}} \ln \left[P_{m,\phi}\abs{h_{m,\phi}}^{2}\left( e^{\frac{L}{B\tau_{m,\phi}}}-1 \right)^{-1} \right]  } -1 \right\}.
					\end{align}
				\end{subequations}
				\end{enumerate}

			\item When $U_{n,\phi}$ only offloads during the first time duration $\tau_{m,\phi}$, this scheme is termed as pure NOMA, and the power allocation is obtained as
			
			\rm{if} $P_{m,\phi} \geq \abs{h_{m,\phi}}^{-2}e^{\frac{\beta_{\phi}L}{B\tau_{m,\phi}}}\left(e^{\frac{L}{B\tau_{m,\phi}}}-1 \right)$,
			
			\begin{subequations}
				\begin{align}
					& P_{n,\phi}^{*} =\abs{h_{n,\phi}}^{-2}\left( e^{\frac{\beta_{\phi}L}{B\tau_{m,\phi}}}-1\right),\\
					& P_{r,\phi}^{*} =0.
				\end{align}
			\end{subequations}

			\item When $P_{n,\phi}^{*}=0$, $U_{n,\phi}$ chooses to offload solely during the section time duration $t_{r,\phi}$, and the optimal power allocation is:
			
			\rm{if} $P_{m,\phi}\geq \abs{h_{m,\phi}}^{-2} \left(e^{\frac{L}{B\tau_{m,\phi} }} -1\right) $,
			
			\begin{subequations}
				\begin{align}
					& P_{n,\phi}^{*}=0,\\
					& P_{r,\phi}^{*} = \abs{h_{n,\phi}}^{-2}\left( e^{\frac{\beta_{\phi} L }{Bt_{r,\phi}}}-1 \right).\label{pure1}
				\end{align}
			\end{subequations}
			
		\end{enumerate}
		\item For $\onesn = 0$:
		\begin{itemize}
			\item[1)]	When $P_{n,\phi}\neq 0$ and $P_{r,\phi}\neq 0$, $U_{n,\phi}$, the hybrid NOMA power allocation is given by
			
			if $P_{m,\phi} \leq |h_{m,\phi}|^{-2} \left(e^{\frac{\beta_{\phi}L}{t_{r,\phi}}}-1\right)$,
						
			\begin{subequations}\label{OptP3}
				\begin{align}
					P_{n,\phi}^{*}=& \abs{h_{n,\phi}}^{-2} \left(P_{m,\phi} \abs{h_{m,\phi}}^{2} +1\right) \left[e^{\frac{\beta_{\phi}L-t_{r,\phi} \ln\left(P_{m,\phi} \abs{h_{m,\phi}}^{2} +1\right)}{B\left(\tau_{m,\phi}+t_{r,\phi}\right)}}-1\right]
					\\ 
					P_{r,\phi}^{*}=&\abs{h_{n,\phi}}^{-2} \left[\left(P_{m,\phi} \abs{h_{m,\phi}}^{2} +1\right)e^{\frac{\beta_{\phi}L-t_{r,\phi} \ln\left(P_{m,\phi} \abs{h_{m,\phi}}^{2} +1\right)}{B\left(\tau_{m,\phi}+t_{r,\phi}\right)}}-1\right].
				\end{align}
			\end{subequations}

			\item[2)] When $P_{r,\phi} = 0$, the pure NOMA case can be obtained as
			\begin{equation}
				P_{n,\phi}^{*}= \abs{h_{n,\phi}}^{-2}\left(P_{m,\phi} \abs{h_{m,\phi}}^{2} +1\right)\left(e^{\frac{\beta_{\phi}L}{B\tau_{m,\phi}}}-1\right).
			\end{equation}
			\item[3)] When $P_{n,\phi} = 0$, the OMA case is:
			\begin{subequations}
				\begin{align}
					& P_{n,\phi}=0,\\
					& P_{r,\phi} = \abs{h_{n,\phi}}^{-2}\left( e^{\frac{\beta_{\phi} L }{Bt_{r,\phi}}}-1 \right).
				\end{align}
			\end{subequations}
			
		\end{itemize}
	\end{enumerate}
\end{thm}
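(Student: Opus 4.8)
The plan is to exploit two structural features of $\peqref{SP1}$. First, once $t_{r,\phi}$ and $\beta_{\phi}$ are frozen, the local-computing term $\kappa_{0}[C(1-\beta_{\phi})L]^{3}/(\tau_{m,\phi}+t_{r,\phi})^{2}$ is a constant and can be dropped; what remains, $\tau_{m,\phi}P_{n,\phi}+t_{r,\phi}P_{r,\phi}$, is linear and strictly increasing in each power. Hence the optimum must lie on the boundary of the region cut out by the rate constraints, and the whole task reduces to spending as little power as possible to satisfy \eqref{P2C1}, with \eqref{P2C2} acting as a side restriction. Second, the rate constraints become affine after a logarithmic change of variables, which convexifies each case.

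For $\onesn=1$ I would set $x=\ln(1+P_{n,\phi}\abs{h_{n,\phi}}^{2})$ and $y=\ln(1+P_{r,\phi}\abs{h_{n,\phi}}^{2})$, so that $P_{n,\phi}=\abs{h_{n,\phi}}^{-2}(e^{x}-1)$ and $P_{r,\phi}=\abs{h_{n,\phi}}^{-2}(e^{y}-1)$. The objective becomes the convex function $\abs{h_{n,\phi}}^{-2}(\tau_{m,\phi}e^{x}+t_{r,\phi}e^{y})$ plus a constant, constraint \eqref{P2C1} collapses to the affine inequality $\tau_{m,\phi}x+t_{r,\phi}y\ge \beta_{\phi}L/B$, and constraint \eqref{P2C2}, after isolating $P_{n,\phi}$, becomes the upper bound $x\le x_{\max}:=\ln\left(P_{m,\phi}\abs{h_{m,\phi}}^{2}(e^{L/(B\tau_{m,\phi})}-1)^{-1}\right)$, which is nonnegative precisely when $P_{m,\phi}\ge\abs{h_{m,\phi}}^{-2}(e^{L/(B\tau_{m,\phi})}-1)$. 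This packages the case into a convex program with one affine inequality, one box bound on $x$, and nonnegativity $x,y\ge 0$.

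The heart of the argument is the KKT analysis of this convex program. Stationarity gives $e^{x}=e^{y}$, i.e. $P_{n,\phi}=P_{r,\phi}$, and forcing \eqref{P2C1} to hold with equality yields the balanced solution \eqref{OptP1} with common exponent $\beta_{\phi}L/[B(\tau_{m,\phi}+t_{r,\phi})]$; this is valid exactly when that balanced value respects $x\le x_{\max}$, which is the stated threshold on $P_{m,\phi}$. When the cap bites, the bound-constraint multiplier activates, $x$ is pinned to $x_{\max}$, and $y$ is recovered from the active \eqref{P2C1}; substituting back reproduces \eqref{OptP2}. The degenerate branches follow by checking when the stationary point leaves the nonnegative orthant: if \eqref{P2C1} can be met with $y=0$ while still obeying $x\le x_{\max}$ we land on the pure-NOMA branch, and if $x$ is driven to $0$ we land on the OMA-only branch, each inheriting its own threshold on $P_{m,\phi}$. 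The case $\onesn=0$ is handled by the same device, except that \eqref{P2C2} degenerates to the trivially satisfied $\ge 0$ and imposes no cap, while $R^{H}_{n,\phi}$ now carries the interference factor $I:=P_{m,\phi}\abs{h_{m,\phi}}^{2}+1$; the substitution $x=\ln(1+P_{n,\phi}\abs{h_{n,\phi}}^{2}/I)$ restores affinity of \eqref{P2C1}, the stationarity condition shifts to $e^{y}=I\,e^{x}$, and eliminating via the tight \eqref{P2C1} gives the shifted-balanced formulas \eqref{OptP3}.

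The step I expect to be the main obstacle is the bookkeeping of the case boundaries rather than any single optimization. I must verify that the threshold inequalities partition the admissible range of $P_{m,\phi}$ consistently, that the formulas agree at each boundary so the optimal value is continuous in $P_{m,\phi}$, and that the complementary-slackness pattern asserted for each branch is the one KKT actually selects. In particular I would confirm, by differentiating the restricted objective along the active line $\tau_{m,\phi}x+t_{r,\phi}y=\beta_{\phi}L/B$ and noting the derivative $\tau_{m,\phi}(e^{x}-e^{y})$ changes sign only at $x=y$, that pinning $x$ to $x_{\max}$ rather than an interior value is optimal whenever the cap is active. Convexity then guarantees each KKT point is a global minimizer, so once the regimes are correctly delineated the closed-form expressions follow.
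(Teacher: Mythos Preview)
Your proposal is correct and follows essentially the same route as the paper: establish convexity of the power-allocation subproblem, write the KKT conditions, and enumerate the solutions according to which constraints are active (rate constraint always tight; cap on $P_{n,\phi}$ from \eqref{P2C2} active or not; nonnegativity of $P_{n,\phi}$ or $P_{r,\phi}$ binding). The only real difference is presentational: the paper works directly in $(P_{n,\phi},P_{r,\phi})$ and writes the Lagrangian with four multipliers, whereas your logarithmic substitution $x=\ln(1+P_{n,\phi}\abs{h_{n,\phi}}^{2})$, $y=\ln(1+P_{r,\phi}\abs{h_{n,\phi}}^{2})$ linearizes \eqref{P2C1} and reduces the stationarity condition to $e^{x}=e^{y}$ (or $e^{y}=I e^{x}$ when $\onesn=0$), which is a tidy way to see why the balanced solution \eqref{OptP1} arises and why the cap forces \eqref{OptP2}; the resulting case boundaries and closed forms coincide with the paper's.
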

\begin{proof}
	Refer to Appendix A.
\end{proof}
\begin{remark}
	Theorem 1 provides the optimal power allocation for both two decoding sequences, i.e., $U_{m,\phi}$ is decode first when $\onesn = 1$, and $U_{n,\phi}$ is decode first when $\onesn = 0$. The optimal solution to  $\peqref{P1}$ is obtained by numerical comparison between these two cases in terms of energy consumption. Both cases can be further divided into three offloading scenarios including hybrid NOMA, pure NOMA and OMA based on different power allocation. For hybrid NOMA case, $U_{n,\phi}$ transmits during both $\tau_{m,\phi}$ and $t_{r,\phi}$, which indicates $P_{n,\phi}>0$, $P_{r,\phi}>0$ and $t_{r,\phi}>0$. Pure NOMA scheme indicates that $U_{n,\phi}$ only transmits simultaneously with $U_{m,\phi}$ during $\tau_{m,\phi}$, and therefore, $P_{r,\phi}=0$ and $t_{r,\phi}=0$. In addition, the OMA case represents that $U_{m,\phi}$ occupies $\tau_{m,\phi}$ solely, and $U_{m,\phi}$ only transmit during $t_{r,\phi}$.
\end{remark}
\begin{remark}
	Appendix A provides the proof for the case $\onesn = 1$. The proof for the case $\onesn = 0$ similarly, and it can be referred to the previous work in \cite{LI2020241}. Thus, the proof for the case $\onesn = 0$ is omitted for this and the following two sub-problems. 
\end{remark}
In this subsection, the optimal power allocation for the hybrid NOMA scheme is obtained when $t_{r,\phi}$ is fixed, and then the optimization of $t_{r,\phi}$ is further studied to minimize $E^{tot}_{n,\phi}$ in the following subsection.
\subsection{Time Schedualing}
The aim of this subsection is to find the optimal time allocation for the second time duration $t_{r,\phi}$ which is solely utilized by $U_{n,\phi}$ for OMA transmission. As aforementioned in Theorem 1, the optimal power allocation for hybrid NOMA scheme is given as a function of $t_{r,\phi}$ and $\beta_{\phi}$. Hence, by fixing $\beta_{\phi}$, $\peqref{P1}$ is rewritten as

\begin{subequations}\refstepcounter{prb} \label{P1S2}
	\begin{align}
		\peqref{P1S2}:\quad
		\min_{\mathrm{t_{r,\phi}}}& \quad &&
		\frac{\kappa_{0}\left[C(1-\beta_{\phi})L\right]^{3}}{\left(\tau_{m,\phi}+t_{r,\phi}\right)^{2}} 
		+ \tau_{m,\phi}P^{*}_{n,\phi} + t_{r,\phi} P^{*}_{r,\phi}\label{P1S2O1}   \\
		\textrm{s.t.}
		&&& 0\leq t_{r,\phi}\leq \tau_{n,\phi}-\tau_{m,\phi} \label{P1SC4}
	\end{align}
\end{subequations}
\begin{prop}
	The offloading energy consumption $\eqref{P1S2O1}$ is monotonically decreasing with respected to $t_{r,\phi}$ for both $\onesn = 1$ and $\onesn = 0$ cases. To minimize the energy consumption, the optimal time allocation is to schedule the entire available time before the deadline $\tau_{n,\phi}$, i.e., 
	
	\begin{equation}
		t^{*}_{r,\phi} = \tau_{n,\phi} - \tau_{m,\phi}
	\end{equation}
\end{prop}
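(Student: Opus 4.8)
The plan is to substitute the closed-form optimal powers from Theorem 1 into the objective \eqref{P1S2O1}, so that the total energy becomes a function of the single variable $t_{r,\phi}$ (with $\beta_{\phi}$ held fixed), and then to show that every additive piece of this function is non-increasing in $t_{r,\phi}$. The local-computing term $\kappa_{0}[C(1-\beta_{\phi})L]^{3}/(\tau_{m,\phi}+t_{r,\phi})^{2}$ is manifestly strictly decreasing, since its numerator is a constant and its denominator grows with $t_{r,\phi}$. Hence the whole statement reduces to controlling the offloading energy $\tau_{m,\phi}P^{*}_{n,\phi}+t_{r,\phi}P^{*}_{r,\phi}$ after the optimal powers are inserted.

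The crucial observation I would isolate is that, in each hybrid-NOMA branch, the offloading energy collapses to $\abs{h_{n,\phi}}^{-2}f(T)$ up to an additive constant, where $T=\tau_{m,\phi}+t_{r,\phi}$ and $f(T)=T\bigl(e^{a/T}-1\bigr)$ for some constant $a\ge 0$. For $\onesn=1$ with $P^{*}_{n,\phi}=P^{*}_{r,\phi}$ as in \eqref{OptP1}, the offloading energy is exactly $(\tau_{m,\phi}+t_{r,\phi})\abs{h_{n,\phi}}^{-2}\bigl(e^{\beta_{\phi}L/(B(\tau_{m,\phi}+t_{r,\phi}))}-1\bigr)$, i.e.\ $f(T)$ with $a=\beta_{\phi}L/B$; for $\onesn=0$ as in \eqref{OptP3}, summing $\tau_{m,\phi}P^{*}_{n,\phi}$ and $t_{r,\phi}P^{*}_{r,\phi}$ telescopes the cross terms and again gives $T\bigl(e^{a/T}-1\bigr)$ up to a constant, now with $a=\tau_{m,\phi}\ln\bigl(P_{m,\phi}\abs{h_{m,\phi}}^{2}+1\bigr)+\beta_{\phi}L/B\ge 0$. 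I would then prove the elementary lemma that $f$ is non-increasing on $T>0$ whenever $a\ge 0$: setting $u=a/T\ge 0$ gives $f'(T)=e^{u}(1-u)-1$, which is $\le 0$ by the standard inequality $e^{-u}\ge 1-u$ (equivalently $e^{u}(1-u)\le 1$). Because $T$ increases with $t_{r,\phi}$, the offloading energy is non-increasing in $t_{r,\phi}$; combined with the strictly decreasing local term, the objective is strictly decreasing, so its minimum over \eqref{P1SC4} is attained at the right endpoint $t^{*}_{r,\phi}=\tau_{n,\phi}-\tau_{m,\phi}$.

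The main obstacle is the piecewise structure of the optimal power: Theorem 1 splits the hybrid-NOMA regime into subcases whose active branch depends on $t_{r,\phi}$ through thresholds such as the one preceding \eqref{OptP2}, and in the branch \eqref{OptP2} the energy is most naturally written as the constant $\tau_{m,\phi}P^{*}_{n,\phi}$ plus $\abs{h_{n,\phi}}^{-2}t_{r,\phi}\bigl(e^{a'/t_{r,\phi}}-1\bigr)$ in the variable $t_{r,\phi}$ itself rather than in $T$. I would dispose of this by applying the \emph{same} lemma to each branch after checking non-negativity of the relevant constant: for \eqref{OptP2} the admissible range of $P_{m,\phi}$ forces $0\le \tau_{m,\phi}\ln\bigl[P_{m,\phi}\abs{h_{m,\phi}}^{2}(e^{L/(B\tau_{m,\phi})}-1)^{-1}\bigr]\le \beta_{\phi}L/B$, which is precisely $a'\ge 0$, so $t_{r,\phi}(e^{a'/t_{r,\phi}}-1)$ is again non-increasing. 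Finally I would argue that the optimal power, and hence the energy, is continuous across the branch boundaries, since a continuous function that is piecewise non-increasing is globally non-increasing; this closes the argument and establishes monotonicity uniformly over the whole feasible interval of $t_{r,\phi}$ for both $\onesn=1$ and $\onesn=0$.
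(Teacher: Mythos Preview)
Your proposal is correct and follows the same route as the paper's Appendix~B: after substituting the optimal powers, the local-computing term is trivially decreasing and the offloading energy reduces to the form $T(e^{a/T}-1)$, whose derivative $e^{u}(1-u)-1\le 0$ (with $u=a/T\ge 0$) gives the monotonicity. The only cosmetic differences are that the paper establishes this last inequality by showing $g(x)=e^{a/x}(1-a/x)-1$ is increasing with $g(\infty)=0$ rather than invoking $e^{-u}\ge 1-u$ directly, and it omits the $\onesn=0$ case per Remark~2, whereas you handle it explicitly together with the branch-continuity observation.
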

\begin{proof}
	Refer to Appendix B.
\end{proof}
By assuming all the data is offloaded to the MEC server, the following lemma studies the uplink transmission energy efficiency of the two hybrid NOMA-MEC schemes for $\onesn=0$ and $\onesn=1$.
\begin{lem}
	Assume all data are offloaded to the MEC server, i.e., $\beta_{\phi}=1$, the solution in \eqref{OptP3} for the case $\onesn=0$ has higher energy consumption than the solution in \eqref{OptP1} for the case $\onesn=1$, if $ \abs{h_{m,\phi}}^{-2}\left(e^{\frac{L}{B\tau_{m,\phi} }} -1\right) \leq  P_{m,\phi} \leq |h_{m,\phi}|^{-2} \left(e^{\frac{L}{\tau_{n,\phi}-\tau_{m,\phi}}}-1\right)$.
\end{lem}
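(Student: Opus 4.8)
The plan is to exploit the two simplifications that the hypotheses and earlier results hand us. Setting $\beta_{\phi}=1$ makes the local-computation numerator $C(1-\beta_{\phi})L$ vanish, so the objective reduces to the offloading energy $\tau_{m,\phi}P_{n,\phi}+t_{r,\phi}P_{r,\phi}$ alone; and by Proposition~1 the optimal time allocation is $t_{r,\phi}^{*}=\tau_{n,\phi}-\tau_{m,\phi}$, so that $\tau_{m,\phi}+t_{r,\phi}^{*}=\tau_{n,\phi}$. I would introduce the shorthand $a=\abs{h_{n,\phi}}^{-2}$, $g=P_{m,\phi}\abs{h_{m,\phi}}^{2}+1\ge 1$, $T=\tau_{n,\phi}$, $s=\tau_{m,\phi}/\tau_{n,\phi}\in(0,1)$, and $v=e^{L/(BT)}>1$, and rewrite both candidate power allocations in these terms so the two energies become directly comparable.

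First I would substitute \eqref{OptP1} into the objective: there $P_{n,\phi}^{*}=P_{r,\phi}^{*}=a(v-1)$, so the $\onesn=1$ energy collapses to $E_{1}=(\tau_{m,\phi}+t_{r,\phi}^{*})\,a(v-1)=aT(v-1)$. Next I would substitute \eqref{OptP3}; evaluating the common exponential factor $\big(P_{m,\phi}\abs{h_{m,\phi}}^{2}+1\big)\exp\!\big[(\beta_{\phi}L-t_{r,\phi}\ln(\cdots))/(B(\tau_{m,\phi}+t_{r,\phi}))\big]$ at $\beta_{\phi}=1$ and $t_{r,\phi}=t_{r,\phi}^{*}$ yields $w:=vg^{s}$, so that the two powers read $P_{n,\phi}^{*}=a(w-g)$ and $P_{r,\phi}^{*}=a(w-1)$, and hence $E_{0}=a\big[Tw-\tau_{m,\phi}g-t_{r,\phi}^{*}\big]$. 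Forming the difference and using $t_{r,\phi}^{*}=T-\tau_{m,\phi}$, I expect the constant terms to telescope into the compact form
\begin{equation}
E_{0}-E_{1}=aT\big[\,v(g^{s}-1)-s(g-1)\,\big].
\end{equation}

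The heart of the argument is then to show that the bracket is strictly positive on the stated range of $P_{m,\phi}$. I would set $\psi(g)=v(g^{s}-1)-s(g-1)$ for $g\ge 1$, noting $\psi(1)=0$ and $\psi'(g)=s\big(vg^{s-1}-1\big)$, which is positive precisely when $g<v^{1/(1-s)}$. The key observation is that the upper bound $P_{m,\phi}\le\abs{h_{m,\phi}}^{-2}\big(e^{L/(\tau_{n,\phi}-\tau_{m,\phi})}-1\big)$ is exactly the statement $g\le v^{1/(1-s)}$, i.e.\ $g$ stays at or below the unique turning point of $\psi$, while the lower bound $P_{m,\phi}\ge\abs{h_{m,\phi}}^{-2}\big(e^{L/(B\tau_{m,\phi})}-1\big)$ forces $g\ge v^{1/s}>1$. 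Thus $g$ lies in $(1,\,v^{1/(1-s)}]$, an interval on which $\psi$ increases strictly from $\psi(1)=0$; therefore $\psi(g)>0$, and since $aT>0$ we conclude $E_{0}>E_{1}$.

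The step I expect to be the main obstacle is the bookkeeping that reduces \eqref{OptP3} to the clean pair $a(w-g),\,a(w-1)$ with $w=vg^{s}$: one must track the $B$-factors in the exponent carefully (using the bandwidth normalisation consistent with the rate definition) to confirm the exponent simplifies to $vg^{s}$ at $\beta_{\phi}=1$ and $t_{r,\phi}^{*}=\tau_{n,\phi}-\tau_{m,\phi}$, since it is this identity that makes the constants in $E_{0}-E_{1}$ cancel. The only conceptual subtlety thereafter is recognising that the hypothesised upper bound coincides exactly with the maximiser $g=v^{1/(1-s)}$ of $\psi$, which is what delivers monotonicity, and hence strict positivity, across the whole admissible interval rather than merely nonnegativity at the endpoints.
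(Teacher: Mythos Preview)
Your proposal is correct and follows essentially the same route as the paper's proof: after substituting $\beta_\phi=1$ and $t_{r,\phi}^{*}=\tau_{n,\phi}-\tau_{m,\phi}$, the paper likewise reduces $E_{2}\ge E_{1}$ to the monotonicity of an auxiliary one-variable function $\zeta(x)$ (with $x=P_{m,\phi}\abs{h_{m,\phi}}^{2}$, i.e.\ your $g-1$), shows $\zeta'$ is decreasing and vanishes exactly at the upper endpoint $x=e^{L/(\tau_{n,\phi}-\tau_{m,\phi})}-1$, and concludes $\zeta(x)\ge\zeta(0)$. Your shorthand $(a,g,T,s,v)$ and the factorisation $E_{0}-E_{1}=aT\,\psi(g)$ are simply a cleaner packaging of the same computation, and your caveat about the $B$-factor bookkeeping is apt, since the paper's own exponent on $(x+1)$ is $(\tau_{m,\phi}-\tau_{n,\phi})/(B\tau_{n,\phi})+1$ rather than $s$.
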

\begin{proof}
	Without considering local computing, the energy consumption for \eqref{OptP1} can be written as
	\begin{equation}
		E_{1} = \tau_{n,\phi}\abs{h_{n,\phi}}^{-2} \left( e^{\frac{L }{B \tau_{n,\phi} } } -1\right),
	\end{equation}
	and the energy consumption for the case \eqref{OptP3} is given as
	\begin{equation}
		\begin{split}
			E_{2} =& \tau_{m,\phi}\abs{h_{n,\phi}}^{-2} \left(P_{m,\phi} \abs{h_{m,\phi}}^{2} +1\right) \left[e^{\frac{L- \left(\tau_{n,\phi}-\tau_{m,\phi}\right) \ln\left(P_{m,\phi} \abs{h_{m,\phi}}^{2} +1\right)}{B\tau_{n,\phi}}}-1\right] \\&+ \left(\tau_{n,\phi}-\tau_{m,\phi}\right) \abs{h_{n,\phi}}^{-2} \left[\left(P_{m,\phi} \abs{h_{m,\phi}}^{2} +1\right)e^{\frac{L-\left(\tau_{n,\phi}-\tau_{m,\phi}\right)\ln\left(P_{m,\phi} \abs{h_{m,\phi}}^{2} +1\right)}{B\tau_{n,\phi}}}-1\right].
		\end{split}		
	\end{equation}
	To proof that $E_{2} \geq E_{1}$, the inequality can be rearranged as	
	\begin{equation}
		-\tau_{m,\phi} P_{m,\phi}\abs{h_{m,\phi}}^{2} + 
		\tau_{n,\phi}e^{\frac{L}{B\tau_{n,\phi}}}\left(P_{m,\phi} \abs{h_{m,\phi}}^{2} +1 \right)^{\frac{\tau_{m,\phi}-\tau_{n,\phi}}{B\tau_{n,\phi}}+1}\geq \tau_{n,\phi} e^{\frac{L}{B\tau_{n,\phi}}}.
	\end{equation}
	Define $ \zeta(x) = -\tau_{m,\phi}x + \tau_{n,\phi}e^{\frac{L}{B\tau_{n,\phi}}}(x+1)^{\frac{\tau_{m,\phi}-\tau_{n,\phi}}{B\tau_{n,\phi}}+1} $,the first order derivative of $ \zeta(x)$ is given as	
	\begin{equation}
		\zeta^{'}(x) = -\tau_{m,\phi} + (\frac{\tau_{m,\phi}-\tau_{n,\phi}}{B\tau_{n,\phi}}+1)\tau_{n,\phi}e^{\frac{L}{B\tau_{n,\phi}}}(x+1)^{\frac{\tau_{m,\phi}-\tau_{n,\phi}}{B\tau_{n,\phi}}}.
	\end{equation}
	Therefore, $\zeta^{'}(x)$ is monotonically decreasing since $\tau_{m,\phi}<\tau_{n,\phi}$, and the following inequality holds:
	\begin{equation}
		\zeta^{'}(x) \geq \zeta^{'}\left(e^{\frac{L}{\tau_{n,\phi}-\tau_{m,\phi}}}-1\right)=0.
	\end{equation}
	Hence for $0 \leq x \leq e^{\frac{L}{\tau_{n,\phi}-\tau_{m,\phi}}}-1$,  $\zeta(x)$ is monotonically increasing, and $\zeta(x) \geq \zeta(0)= \tau_{n,\phi} e^{\frac{L}{B\tau_{n,\phi}}}$, which illustrates that $E_{2} \geq E_{1}$.
\end{proof}

\subsection{Offloading Task Assignment}
In this subsection, we focus on the optimization of the task assignment coefficient for $U_{n.\phi}$ in each group $\phi$. Given the optimal power allocation and time arrangement,  $\peqref{P1}$ is reformulated as

\begin{subequations} \refstepcounter{prb}\label{Pn4r}
	\begin{align}
		\peqref{Pn4r}:\quad
		\min_{\beta_{\phi}}&  &&
		\frac{\kappa_{0}\left[C(1-\beta_{\phi})L\right]^{3}}{\left(\tau_{m,\phi}+t^{*}_{r,\phi}\right)^{2}} 
		+ \tau_{m,\phi}P^{*}_{n,\phi} + t^{*}_{r,\phi} P^{*}_{r,\phi}   \\
		\textrm{s.t.}&\quad &&   0\leq \beta_{\phi} \leq 1, \label{P4C}
	\end{align}
\end{subequations}
\begin{prop}
	The above problem is convex, and the optimal task assignment coefficient can be characterized by those three optimal power allocation schemes for the hybrid NOMA model in \eqref{OptP1}, \eqref{OptP2}, and \eqref{OptP3}, which is given by
	\begin{equation}
		\beta_{\phi}^{*}=1-\frac{2}{z_{2,\phi}}\mathcal{W}\left(\frac{1}{2}z_{1,\phi}^{-\frac{1}{2}}z_{2,\phi}e^{\frac{z_{2,\phi}}{2}}\right),
	\end{equation}
		where $\mathcal{W}$ denotes the single-valued Lambert W function, and $z_{1,\phi}$ and $z_{2,\phi}$ are determined by the different power allocation schemes, which are presented as follows:
	\begin{itemize}
		\item[(a)] $\onesn = 1$:	
		
		If \eqref{OptP1} is adopted:
		
		\begin{equation}
			\begin{cases}
				&z_1 = \frac{3\kappa_{0}BC^{3}L^{2}\abs{h_{n,\phi}}^{2}}{\tau_{n,\phi}^{2}}, \quad  \\
				&z_2 =  \frac{L}{B\tau_{n,\phi}}
			\end{cases}
		\end{equation}
	
		If \eqref{OptP2} is adopted:
		
		\begin{equation}
			\begin{cases}
				&z_1 = \frac{3\kappa_{0}B\abs{h_{n,\phi}}^{2}C^{3}L^{2}e^{2u_{\phi}} }{\tau_{n,\phi}^{2}}  \\
				&z_2 =  \frac{L}{B\left(\tau_{n,\phi}-\tau_{m,\phi}\right)}
			\end{cases}
		\end{equation}
		where $u_{\phi}= \frac{\tau_{m,\phi}}{\left(\tau_{n,\phi}-\tau_{m,\phi}\right)} \ln \left[P_{m,\phi}\abs{h_{m,\phi}}^{2}\left( e^{\frac{L}{B\tau_{m,\phi}}}-1 \right)^{-1} \right]$
		
		\item[(b)] $\onesn = 0$:\\
		\begin{equation}
			\begin{cases}
				&z_{1,\phi} =\frac{3 \kappa_{0} B C^{3}L^{2}  |h_{n,\phi}|^{2} e^{\frac{(\tau_{n,\phi}-\tau_{m,\phi})\ln \left(P_{m,\phi} \abs{h_{m,\phi}}^{2} +1\right)}{B\tau_{n,\phi}}}}{{\tau_{n,\phi}}^2 \left(P_{m,\phi} \abs{h_{m,\phi}}^{2} +1\right)}  \\
				&z_{2,\phi} =  \frac{L}{B\tau_{n,\phi}}
			\end{cases}
		\end{equation}
		
	\end{itemize}	
\end{prop}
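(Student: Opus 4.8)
The plan is to collapse $\peqref{Pn4r}$ into a genuinely one-dimensional problem in $\beta_{\phi}$, prove it is convex, and then invert its first-order optimality condition in closed form using the Lambert $\mathcal{W}$ function. First I would substitute the optimal time $t^{*}_{r,\phi}=\tau_{n,\phi}-\tau_{m,\phi}$ from Proposition~1 (so that $\tau_{m,\phi}+t^{*}_{r,\phi}=\tau_{n,\phi}$) together with the hybrid-NOMA powers $P^{*}_{n,\phi},P^{*}_{r,\phi}$ from Theorem~1 into the objective. The decisive structural observation is that, in each of the three schemes \eqref{OptP1}, \eqref{OptP2}, \eqref{OptP3}, the offloading energy $\tau_{m,\phi}P^{*}_{n,\phi}+t^{*}_{r,\phi}P^{*}_{r,\phi}$, viewed as a function of $\beta_{\phi}$, reduces to a single exponential term $a_{\phi}e^{z_{2,\phi}\beta_{\phi}}$ plus a $\beta_{\phi}$-independent constant. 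For instance, in \eqref{OptP1} one has $P^{*}_{n,\phi}=P^{*}_{r,\phi}=\abs{h_{n,\phi}}^{-2}\bigl(e^{z_{2,\phi}\beta_{\phi}}-1\bigr)$ with $z_{2,\phi}=L/(B\tau_{n,\phi})$, so the offloading energy equals $\tau_{n,\phi}\abs{h_{n,\phi}}^{-2}\bigl(e^{z_{2,\phi}\beta_{\phi}}-1\bigr)$; in \eqref{OptP2} the term $P^{*}_{n,\phi}$ is constant in $\beta_{\phi}$ and only $P^{*}_{r,\phi}$ carries the exponential $e^{z_{2,\phi}\beta_{\phi}-u_{\phi}}$. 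In all three cases the objective therefore takes the common form
\begin{equation}
	E(\beta_{\phi}) = \frac{\kappa_{0}C^{3}L^{3}}{\tau_{n,\phi}^{2}}\,(1-\beta_{\phi})^{3} + a_{\phi}\,e^{z_{2,\phi}\beta_{\phi}} + b_{\phi},
\end{equation}
with $a_{\phi}>0$; reading off $a_{\phi}$ and $z_{2,\phi}$ from each scheme is exactly what produces the case-specific $z_{1,\phi},z_{2,\phi}$ in the statement.

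Second, I would establish convexity. On $[0,1]$ we have $\frac{d^{2}}{d\beta_{\phi}^{2}}(1-\beta_{\phi})^{3}=6(1-\beta_{\phi})\ge 0$, and $e^{z_{2,\phi}\beta_{\phi}}$ is convex since $z_{2,\phi}>0$; hence $E$ is a nonnegative combination of convex functions and is convex. Together with the interval constraint \eqref{P4C}, this shows $\peqref{Pn4r}$ is a convex program, so any interior stationary point is the unique global minimizer.

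Third, I would solve the stationarity condition. Setting $E'(\beta_{\phi})=0$ and clearing the common factors gives, in every case, the transcendental equation $e^{z_{2,\phi}\beta_{\phi}}=z_{1,\phi}(1-\beta_{\phi})^{2}$, which defines $z_{1,\phi}$. Putting $y=1-\beta_{\phi}$ turns this into $z_{1,\phi}y^{2}=e^{z_{2,\phi}}e^{-z_{2,\phi}y}$; taking the positive square root (legitimate since both sides are positive and $y\ge 0$) yields $z_{1,\phi}^{1/2}\,y\,e^{z_{2,\phi}y/2}=e^{z_{2,\phi}/2}$. The substitution $w=z_{2,\phi}y/2$ then gives $w\,e^{w}=\tfrac{1}{2}z_{1,\phi}^{-1/2}z_{2,\phi}e^{z_{2,\phi}/2}$, whose solution on the principal (single-valued) branch is $w=\mathcal{W}\!\bigl(\tfrac{1}{2}z_{1,\phi}^{-1/2}z_{2,\phi}e^{z_{2,\phi}/2}\bigr)$. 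Back-substituting $y=2w/z_{2,\phi}$ and $\beta_{\phi}=1-y$ recovers the claimed $\beta_{\phi}^{*}$.

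I expect the main obstacle to be the single-exponential collapse in the \eqref{OptP2} and \eqref{OptP3} cases, where $P^{*}_{n,\phi}$ and $P^{*}_{r,\phi}$ each contain additive offsets and a shared exponent with a fixed shift ($u_{\phi}$, or the $\ln(P_{m,\phi}\abs{h_{m,\phi}}^{2}+1)$ term): one must verify carefully that these combine so that only $e^{z_{2,\phi}\beta_{\phi}}$ remains $\beta_{\phi}$-dependent and that all the $-1$ offsets and shift factors gather into $a_{\phi}$ and $b_{\phi}$. I would also confirm the root lies in $(0,1)$: since $E''>0$, $E'$ is strictly increasing, and $E'(1)=a_{\phi}z_{2,\phi}e^{z_{2,\phi}}>0$, the stationary point is interior precisely when $E'(0)<0$, in which case the Lambert-$\mathcal{W}$ formula is the global minimizer; otherwise the optimum is the boundary value, which the stated interior formula tacitly assumes away.
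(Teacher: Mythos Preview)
Your proposal is correct and follows essentially the same route as the paper: substitute the optimal powers and time from Theorem~1 and Proposition~1, reduce the objective to a cubic-plus-exponential in $\beta_{\phi}$, set the derivative to zero to obtain $z_{1,\phi}(1-\beta_{\phi})^{2}=e^{z_{2,\phi}\beta_{\phi}}$, and invert via the Lambert $\mathcal{W}$ function using the substitution $b_{\phi}=1-\beta_{\phi}$. Your treatment is in fact slightly more complete than the paper's Appendix~C, which writes the KKT conditions but does not spell out the convexity argument or the boundary check you include at the end.
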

\begin{proof}
	Refer to Appendix C
\end{proof}
\begin{remark}
	Problem $\peqref{Pn4r}$ is the lowest level of the proposed multilevel programming method, which provides three task assignment solutions corresponding to the three power allocation schemes \eqref{OptP1}, \eqref{OptP2}, and \eqref{OptP3} respectively. The final solution to the energy minimization problem $\peqref{P1}$ can be obtained by substituting the optimal task assignment into the corresponded power allocation schemes. Then the most energy efficient scheme is selected among  \eqref{OptP1}, \eqref{OptP2}, and \eqref{OptP3} by comparing the numerical energy consumption for each scheme. 
\end{remark}


\section{Deep Reinforcement Learning Framework for User Grouping} \label{alg}
In the previous section, it is assumed that the user grouping is given, and the optimal resource allocation is obtained in closed-form. The optimal user grouping can be obtained by exploring all possible user grouping combinations and find the one with the lowest energy consumption. Although this method can obtain the optimal user pairing scheme, the complexity of the exhaustive search method is high, and it is not possible to output real time decisions. Therefore, we propose a fast converge user pairing training algorithm based on DQN to obtain the user grouping policy, which is introduced in the following subsection, in which the state space, action space and reward function are defined. Subsequently, the training algorithm for the user grouping policy is provided.
\subsection{The DRL Framework}
The optimization of user grouping is modeled as a DRL task, where the base station is treated as the agent to interact with the environment which is defined as the MEC network. In each time slot $t$, the agent takes an action $a_{t}$ from the action space $\mathcal{A}$ to assign users into pairs according to an optimal policy which is learned by the DNN. The action taken under current state $s_{t}$ results an immediate reward $r_{t}$, which is obtained at the beginning of the next time slot, and then move to the next state $s_{t+1}$. In this problem, the aforementioned terms are defined as follows.
\begin{enumerate}[label=\arabic*)]
	\item \emph{State Space:} The state $s_{t} \in \mathcal{S}$ is characterized by the current channel gains and offloading deadlines of all users since the user grouping is mainly determined by those two factors. Therefore, the state  $s_{t}$ can be expressed as
	
	\begin{equation}
		s_{t} = \{h_{1}[t],h_{2}[t],...,h_{k}[t],...,h_{K}[t]; \tau_{1}[t],\tau_{2}[t],...,\tau_{k}[t],...,\tau_{K}[t] \}.
	\end{equation}
	\item \emph{Action Space:} At each time slot $t$, the agent takes a action $a_{t} \in \mathcal{A}$, which contains all the possible user grouping decisions $j_{k,\phi}$. The action is defined as
	\begin{equation}
		a_{t} = \{j_{1,1}[t],...j_{k,\phi}[t],...j_{K,\Phi}[t] \},
	\end{equation}
	where $j_{k,\phi} = 1$ indicates that $U_{k}$ is assigned to group $\phi$. In our proposed scheme, each group can only be assigned with two different users.
	\item \emph{Rewards:} The immediate reward $r_{t}$ is described by the sum of the energy consumption of each groups after choosing the action $a_{t}$ under state $s_{t}$. The numerical result of the energy consumption in each group can be obtained by solving the problem $\peqref{P1}$. Therefore, the reward is defined as
	
	\begin{equation} \label{rwd}
		r_{t} =-\sum_{\phi=1}^{\Phi} E^{tot}_{\phi}[t]
	\end{equation}
\end{enumerate}
The aim of the agent is to find an optimal policy that maximizes the long-term discounted reward, which can be written as

\begin{equation}
	\begin{split}
		R_{t} &= r_{t} + \gamma r_{t+1} + \gamma^2 r_{t+2} +...\\
		&=\sum_{i=0}^{\infty}\gamma^{i} r_{t+i},
	\end{split}	
\end{equation}
where $\gamma \in [0,1]$ is the discount factor which balance the immediate reward and the long-term reward.
\subsection{DQN-based NOMA User Grouping Algorithm}
To accommodate the reward maximization problem, a DQN-based user grouping algorithm is proposed in this paper, illustrated in \fref{nnmodel}. In the conventional Q-learning, Q-table is obtained to describe the quality of an action for a given state, and the agent chooses actions according to the Q-values to maximize the reward. However, it will be slow for the system to obtain Q-values for all the state-action pairs if the state space and action space are large. Therefore, to speed up the learning process, instead of generating and processing all possible Q-values, DNNs are introduced to estimate the Q-values based on the weight of DNNs. We utilize a DNN to estimate the Q-value denoted by Q-network, which the Q-estimation is represented as $Q(s_t,a_t;\theta)$, and an additional DNN with the same setting to generate the target network with $Q(s_t, a_t;\theta^-)$ for training, where $\theta$ and $\theta^-$ are the weights of the DNNs. 

 We adopt $\epsilon$-greedy policy  with $0<\epsilon<1$ to balance the exploration of new actions and the exploitation of known actions by either randomly choosing an action $a_t \in \mathcal{A}$ with probability $\epsilon$ to avoid the agent sticking on non-optimal actions or picking the best action with the probability $1-\epsilon$ such that \cite{MnihKSRVBGRFOPB15}:
 
 \begin{equation}
 	a_t = \arg\max_{a_t \in \mathcal{A}}Q(s_t, a_t; \theta).
 \end{equation}
  Generally, the threshold $\epsilon$ is fixed, which indicates the probability of choosing random action remains the same throughout the whole learning period. However, it brings fluctuation when the algorithm converges and may lead to diverge again in extreme cases. In this paper, we adopt an $\epsilon$-greedy decay scheme, which a large $\epsilon^{+}$ (more greedy) is given at the beginning, and then the it decays with each training step until a certain small probability $\epsilon^{-}$. The above policy encourages the agent to explore the never-selected actions at the beginning, and then the agent intends to take more large reward-guaranteed actions when the network is already converged.
\begin{figure}
	\centering
	\includegraphics[scale=0.5, trim=4.5cm 10 0 10, clip]{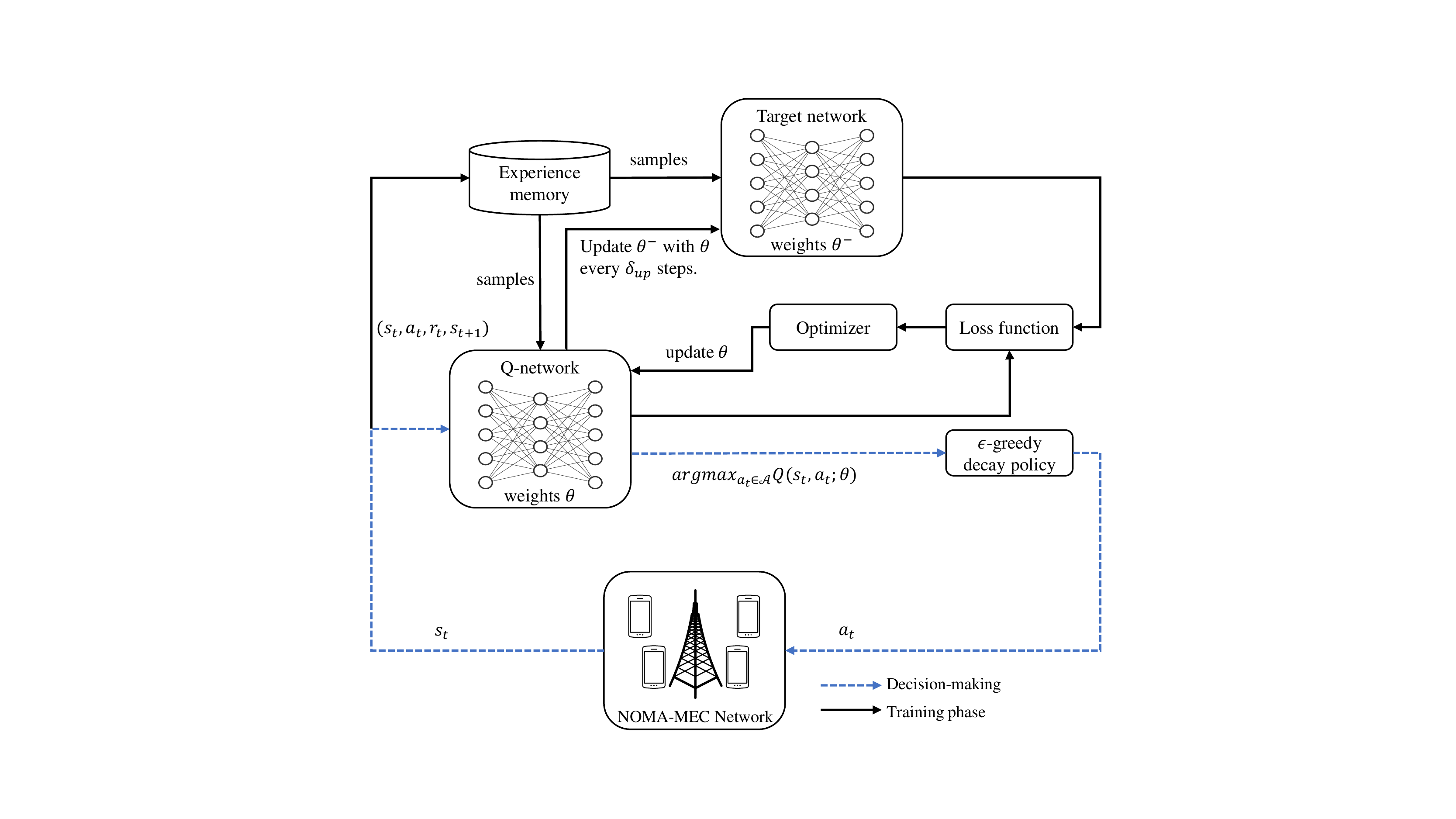}
	\caption{A demonstration of the proposed DQN-based user grouping scheme in the NOMA-MEC network.}
	\label{nnmodel}
\end{figure}

\begin{algorithm}[t]
	\caption{DQN-based User Grouping Algorithm}
	\label{alg:alg1}
	\begin{algorithmic}[1]
		\STATE \textbf{Parameter initialization:}
		\STATE Initialize Q-network $Q(s_{i},a_{i};\theta) $ and target network $Q(s_{i},a_{i};\theta^{-})$.
		\STATE Initialize Reply memory $\mathcal{R}$ with size $|\mathcal{R}|$, and memory counter.
		\STATE Initialize $\gamma$, $\epsilon^{+}$, $\epsilon^{-}$, decay step, batch size, target network update interval $\delta_{up}$.
		\STATE \textbf{Training Phase:}
		\FOR{$episode = 1, 2,..., N_{episode}$}
		\FOR{$time\ step = 1, 2,..., N_{ts}$}
		\STATE Input state $s_t$ into Q-network and obtain Q-values for all actions.
		\STATE Take the user grouping decision as action $a_t$ based on the $\epsilon$-greedy decay policy.
		\STATE Agent receive the reward $r_t$ based on \eqref{rwd} and the observation to next state $s_{t+1}$.
		\STATE Store the experience tuple $(s_t, a_t, r_t, s_{t+1})$ into the memory $\mathcal{R}$.
		\IF{memory counter $>|\mathcal{R}|$} 
		\STATE {Remove the old experiences from the beginning.}
		\ENDIF
		\STATE Randomly sample a mini-batch of the experience tuples $(s_t, a_t, r_t, s_{t+1})$ with batch size and feed into the DNNs.
		\STATE Update the Q-network weights $\theta$ by calculating the Loss function $\eqref{loss}$.
		\STATE Replace $\theta^{-}$ by $\theta$ after every $\delta_{up}$ steps.
		\ENDFOR 
		\ENDFOR		
	\end{algorithmic}
\end{algorithm}

The target network only updates every certain iterations, which provides a relatively stable label for the estimation network. The agent stores the tuples $(s_t, a_t, r_t, s_{t+1})$ as experiences to a memory buffer $\mathcal{R}$, and a mini-batch of samples from the memory are fed into the target network to generate the Q-values labels, which is given by

\begin{equation}
	y_i = r_i + \max_{a_{i+1}\in\mathcal{A}} Q(s_{i+1},a_{i+1};\theta^{-}), \quad \forall i \in \mathcal{R}
\end{equation}
Hence, the loss function for the Q-network can be expressed as

\begin{equation}\label{loss}
	Loss(\theta) = \left(y_{i} - Q(s_{i},a_{i};\theta) \right) , \quad \forall i \in \mathcal{R}
\end{equation}
The Q-network can be trained by minimizing the loss function to obtain the new $\theta$, and the weights of the target network is updated after $\delta_{up}$ steps by replacing $\theta^{-}$ with $\theta$. The whole DQN-based user grouping framework is summarized in Algorithm \ref{alg:alg1}.

\section{Simulation Results} \label{res}
In this section, several simulation results are presented to evaluate the convergence and effectiveness of the proposed joint resource allocation and user grouping scheme. Specifically, the impact of learning rate, user number, offloading data length, and delay tolerance are investigated. Moreover, the proposed hybrid SIC scheme is compared to some benchmarks including QoS based SIC scheme and other NOMA and OMA schemes.

The system parameters are set up as follows. All users are distributed uniformly and randomly in a disc-shape cell where the base station located in the cell center. The total number of users is six, and each of them has a task contains $2$ Mbit of data for offloading. As aforementioned, the delay sensitive primary user $U_{m,\phi}$ is allocated with a predefined power which is $P_{m,\phi} = 1$ W for all groups in the simulation. The delay tolerance for each user is given randomly between $[0.2,0.3]$ seconds. In addition, the rest of the system parameters are listed in Table \ref{Para_sys}.
\begin{table}
	\small
	\caption{System parameters} 
	\label{Para_sys}
	\centering
	\begin{tabular}{|l|c|}   	
		\hline
		Effective capacitance coefficient & $10^{-28} $
		\\
		\hline
		Number of CPU cycles required per bit & $10^3$\\
		\hline	
		Transmission bandwidth $B$ & $2$ MHz \\
		\hline
		Path loss exponent $\alpha$ & $3.76$ \\
		\hline	
		Noise spectral density $N_0$ & $-174$ dBm/Hz	\\
		\hline
		Maximum cell radius & $1000$ m\\
		\hline
		Minimum distance to base station & $50$ m\\
		\hline
	\end{tabular}   	    	  	
\end{table}
\begin{table}
	\small
	\caption{Hyper-parameters} 
	\label{Para_hp}
	\centering
	\begin{tabular}{|l||c|}   	
		\hline
		$\epsilon$-greedy coefficient  & $0.5-0.01$
		\\
		\hline
		$\epsilon$-greedy decay steps  & $2000$ \\
		\hline
		Discount factor $\gamma$  & $0.7$ \\
		\hline	
		Reply memory size $\mathcal{R} $ & $20000$	\\
		\hline
		Batch size & $64 $\\
		\hline
		Target network update interval $\delta_{up}$ & $10$\\
		\hline
		Number of episode $N_{episode}$ & $150$\\
		\hline
		Number of time steps $N_{ts}$ & $500$\\
		\hline
	\end{tabular}   	    	  	
\end{table}
\par
To implement the DQN algorithm, the two DNNs are configured with the same settings, where each of them consists of four fully connected layers, and two of which are hidden layers with 200 and 100 neurons respectively. The activation function we adopted for all hidden layers is Rectified Linear Unit (ReLU), i.e., $f(x)=\max(0,x)$, and the final output layer is activated by Tanh of which the range is $(-1, 1)$ \cite{relu2018}. The Adaptive moment estimation optimizer (Adam) method is used to learn the DNN weight $\theta$ with given learning rate \cite{Adam2014}. The rest of the hyper-parameters are listed in Table \ref{Para_hp}. All simulation results are obtained with PyTorch 1.70 and CUDA 11.1 on Python 3.8 platform.

\subsection{Convergence of Framework}
\begin{figure} 	
	\centering
	\includegraphics[scale=0.6]{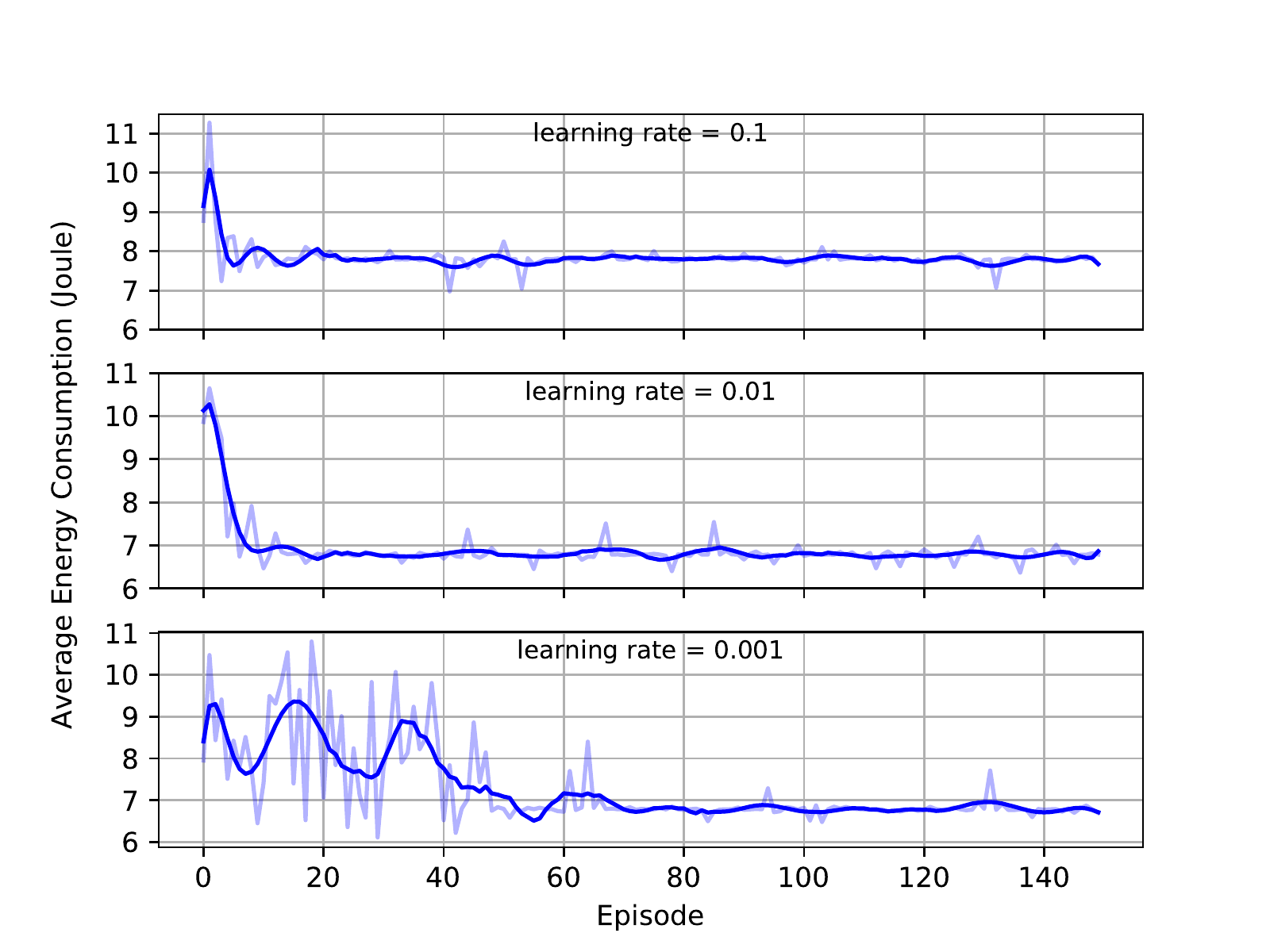}
	\caption{Average energy consumption versus training episodes with different learning rate.}
	\label{r1}
\end{figure}
\begin{figure}
	\centering
	\includegraphics[scale=0.6]{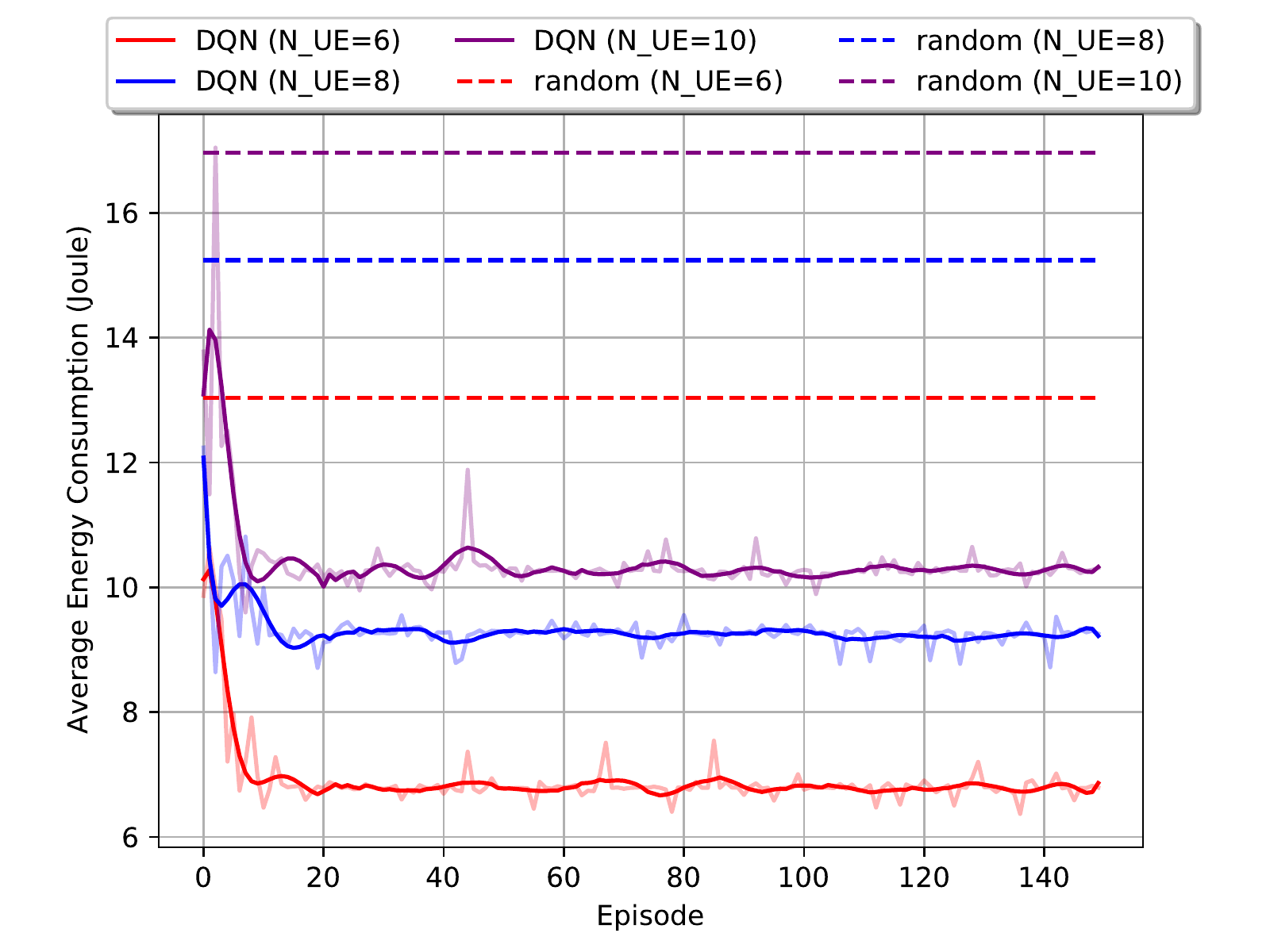}
	\caption{Average energy consumption versus training episodes with different numbers of users. }
	\label{r2}
\end{figure}
\begin{figure}
	\centering
	\includegraphics[scale=0.6]{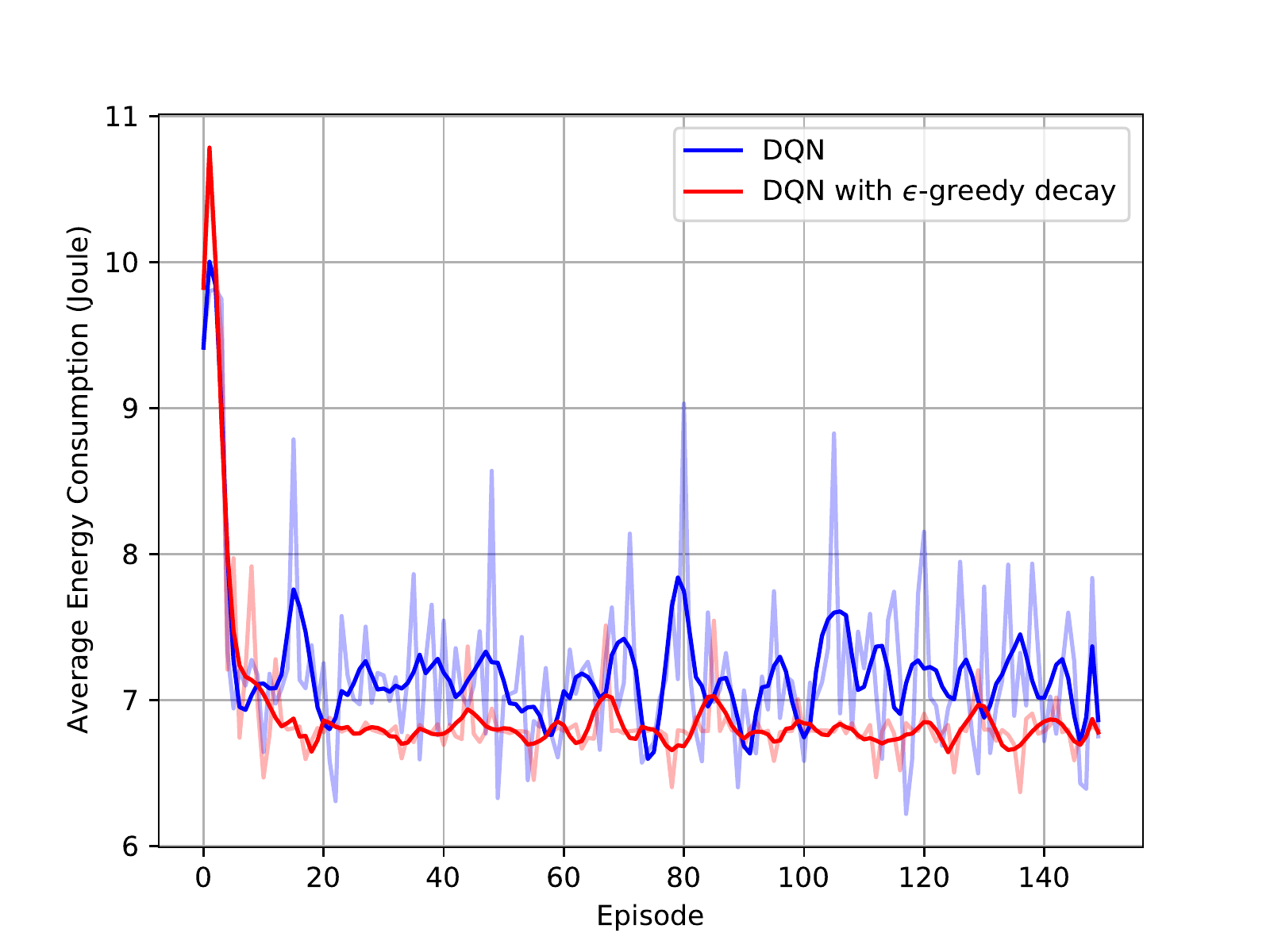}
	\caption{Average energy consumption versus training episodes with different numbers of users. }
	\label{r3}
\end{figure}
In this part, we evaluate the convergence of the proposed DQN based user pairing algorithm. \fref{r1} compares the convergence rate of the average reward for each episode under different learning rate, which is described by the average energy consumption. Learning rate controls how much it should be to adjust the weights of a DNN based on the network loss, and we set the learning rate $=[0.1, 0.01,0.001]$ to observe its influence to the convergence. The network with $0.1$ learning rate converges slightly faster than the one with $0.01$ learning rate, and both of them converge much faster than the network with $0.001$ learning rate. However, when the learning rate is $0.1$, even though the large learning has better convergence, it overshoots the minimum and therefore has higher energy consumption after converge than other two plots. Therefore, the most suitable learning rate for our proposed DQN algorithm is $0.01$, which is adopted to obtain the rest of simulation results in this paper.
\par 
\fref{r2} illustrates the effectiveness of the DQN user grouping algorithm proposed in this paper. By setting the numbers of users to $[6, 8, 10]$, the algorithm shows a similar performance that the average energy consumption decreases over training and converges within the first 20 episodes for the all three cases. Moreover, more users in the network can result in higher energy consumption, and the algorithm shows the superior performance over the random policy, which reduces the energy consumption significantly.
\par
The $\epsilon$-greedy decay policy to the convergence performance is further investigated in \fref{r3}. The $\epsilon$-greedy coefficient for the blue curve is set to $0.1$ while the red curve adopts the $\epsilon$-greedy decay policy following the parameters in Table \ref{Para_hp}. Since the decay policy starts with large $\epsilon$, the network is more likely to choose the random action at the beginning, and hence the energy consumption is higher at the beginning. With $\epsilon$ decays over episode, the network chooses the actions which have been selected before that guarantees large rewards, and therefore it is more stable afterwards. Meanwhile, the network without the decay policy has significant fluctuations during the training since it has a  greater chance to choose the random actions throughout the training. However, if a very small $\epsilon$ is adopted, the network will be less likely to explore some actions, which may stick on the non-optimal actions.

\subsection{Average Performance of Proposed Scheme}

In this part, we present the average performance of the proposed NOMA-MEC scheme to show the impact of $P_{m,\phi}$, offloading data length, and maximum delay tolerance. Meanwhile, our proposed scheme is compared with the one without task assignment and OMA offloading to show the superior performance gap. As shown in \fref{r4}, the energy consumption of both hybrid-SIC schemes raises and then decreases as $P_{m,\phi}$ increases. Since $P_{m,\phi}$ is relatively small at the beginning, $U_{m,\phi}$ is not likely to be decoded first to satisfy the constraint \eqref{P1C2} in the case $\onesn=1$. Therefore, $U_{n,\phi}$ is morel likely to be decoded in priory, and increasing $P_{m,\phi}$ causes more interference to $U_{n,\phi}$. After the power indicated by the arrows, the case $\onesn=1$ becomes feasible for both with and without task assignment schemes, and it is evident that the case $\onesn=1$ has better energy efficiency compared to the case $\onesn=0$. Moreover, the hybrid-SIC scheme with task assignment outperforms the one without task assignment in the blue line. The one with task assignment have a wider lower-bound of the feasible range of the power allocation for case $\onesn=1$ in \eqref{OptP1}, which means that it can adopt the $\onesn=1$ case with smaller $P_{m,\phi}$. In addition, both hybrid SIC schemes has lower energy consumption than the OMA scheme.
In \fref{r5}, the energy consumption is presented as a function of the offloading data length. As the data length increases, the average energy consumption also grows. Our proposed hybrid-SIC scheme reduces the energy consumption significantly especially when the data length is large. Moreover, \fref{r6} reveals the energy consumption comparisons versus the maximum delay tolerance for $U_{n,\phi}$. With tight deadlines, the energy consumption of the hybrid-SIC scheme is much lower than OMA scheme, and more portion of data is processed locally to save energy compared to the fully offloading curve.

\begin{figure}
	\centering
	\includegraphics[scale=0.6]{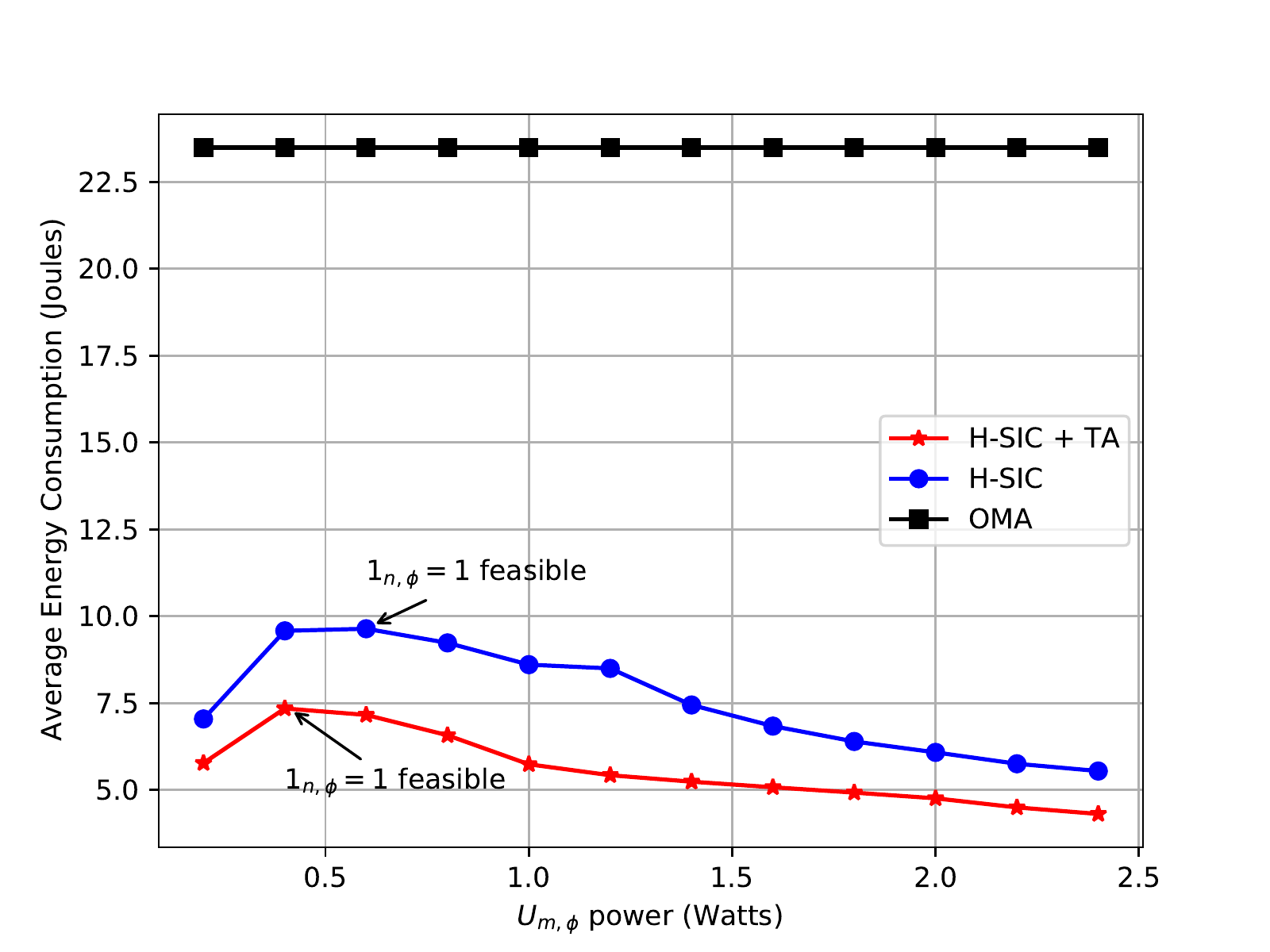}
	\caption{Average energy consumption versus training episodes with different numbers of users. }
	\label{r4}
\end{figure}
\begin{figure}
	\centering
	\includegraphics[scale=0.6]{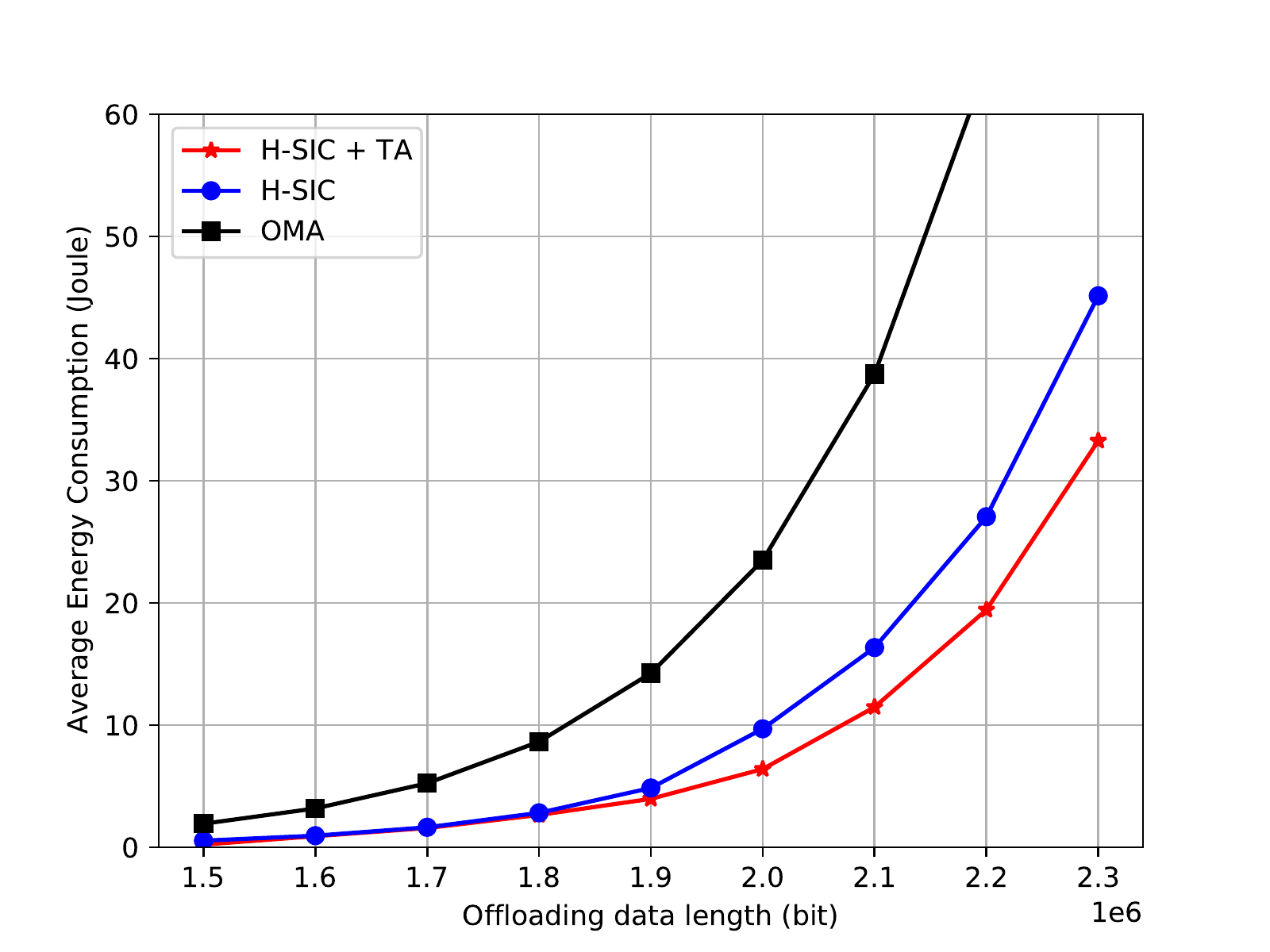}
	\caption{Average energy consumption versus training episodes with different numbers of users. }
	\label{r5}
\end{figure}
\begin{figure}
	\centering
	\includegraphics[scale=0.6]{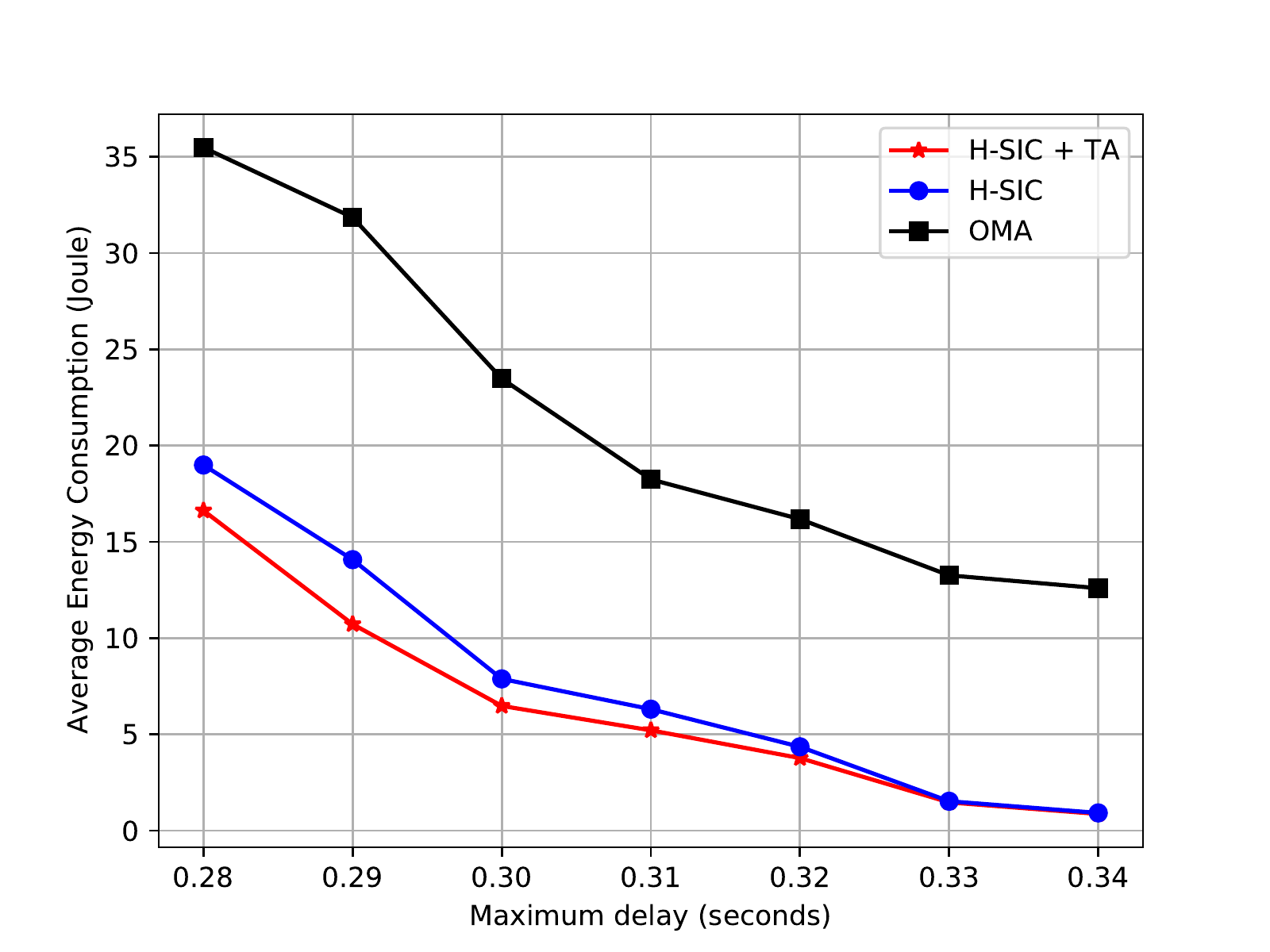}
	\caption{Average energy consumption versus training episodes with different numbers of users. }
	\label{r6}
\end{figure}

\section{Conclusion} \label{con}
This paper studied the resource allocation problem for a NOMA-assisted MEC network to minimize the energy consumption of users' offloading activities.
The hybrid NOMA scheme has two duration during each time slot, in which NOMA is adopted to serve the both users simultaneously during the first time duration, and a dedicate time slot is scheduled to offload the remaining part of the delay tolerable user solely by OMA. Upon fixing the user grouping, the non-convex problem was decomposed into three sub-problems including power allocation, time allocation and task assignment, which were all solved optimally by studying the convexity and monotonicity. The hybrid SIC scheme selects the SIC decoding order dynamically by a numerical comparison of the energy consumption between different decoding sequences. Finally, after solving those sub-problems, we proposed a DQN based user grouping algorithm to obtain the user grouping policy and minimize the long-term average offloading energy consumption. By comparing with various benchmarks, the simulation results proved the superiority of the proposed NOMA-MEC scheme in terms of energy consumption.

\appendix
	\subsection{Proof of Theorem 1}
	By fixing $t_{r,\phi}$ and $\beta_{\phi} $, the above problem in the case $\onesn =1 $ can be rewritten as:
\begin{subequations} \refstepcounter{prb} \label{AP1}
	\begin{align}
		\peqref{AP1}:
		\min_{\substack{\mathrm{P_{n,\phi},P_{r,\phi}}}}& \quad &&
		\frac{\kappa_{0}\left[C(1-\beta_{\phi})L\right]^{3}}{\left(\tau_{m,\phi}+t_{r,\phi}\right)^{2}} 
		+ \tau_{m,\phi}P_{n,\phi} + t_{r,\phi} P_{r,\phi}   \\
		\textrm{s.t.}&\quad &&  
		\tau_{m,\phi} B\ln\left(1+P_{n,\phi}\abs{h_{n,\phi}}^{2}  \right)   + t_{r,\phi} B\ln \left( 1+ P_{r,\phi}\abs{h_{n,\phi}}^2\right) \geq \beta_{\phi} L \label{AP1C1} \\
		&&& \tau_{m,\phi} B \ln \left(1+ \frac{P_{m,\phi}\abs{h_{m,\phi}}^2}{P_{n,\phi} \abs{h_{n,\phi}}^{2}+1} \right) \geq  L \\
		&&& P_{n,\phi} \geq 0, P_{r,\phi} \geq 0 \label{AP1C2} \\ 
	\end{align}
\end{subequations}
It is evident that the problem is convex, and by rearranging \eqref{AP1C2} as
\begin{equation}
	\begin{split}
		P_{n,\phi} \abs{h_{n,\phi}}^{2} - P_{m,\phi}\abs{h_{m,\phi}}^2 \left(e^{\frac{L}{B\tau_{m,\phi} }} -1\right)^{-1} + 1	 \leq 0,
	\end{split}
\end{equation}
the Lagrangian function can be obtained as follows:
\begin{equation}
	\begin{split}
		\mathcal{L}(P_{n,\phi},P_{r,\phi},\bm{\lambda})=&\frac{\kappa_{0}\left[C(1-\beta_{\phi})L\right]^{3}}{\left(\tau_{m,\phi}+t_{r,\phi}\right)^{2}} + \tau_{m,\phi}P_{n,\phi} + t_{r,\phi} P_{r,\phi} - \lambda_{1} P_{n,\phi}-\lambda_{2} P_{r,\phi} + \lambda_{3}\beta_{\phi}L \\
		&-\lambda_{3}\tau_{m,\phi} B\ln\left(1+P_{n,\phi}\abs{h_{n,\phi}}^{2}  \right)-\lambda_{3}t_{r,\phi} B\ln \left( 1+ P_{r,\phi}\abs{h_{n,\phi}}^2\right)\\
		&+\lambda_{4} \left( P_{n,\phi} \abs{h_{n,\phi}}^{2} - P_{m,\phi}\abs{h_{m,\phi}}^2 \left(e^{\frac{L}{B\tau_{m,\phi} }} -1\right)^{-1} + 1 \right),
	\end{split}
\end{equation}
where $\bm{\lambda} \triangleq [\lambda_{1},\lambda_{2},\lambda_{3},\lambda_{4}]$ are the Lagrangian multipliers.
The stationary conditions are given as

\begin{subequations}
	\begin{align}
		& 	\frac{\partial \mathcal{L}}{\partial P_{n,\phi}} = \tau_{m,\phi} - \lambda_{1}
		-\lambda_{3}\tau_{m,\phi} B\frac{\abs{h_{n,\phi}}^2}{P_{n,\phi}\abs{h_{n,\phi}}^2+1}
		+ \lambda_{4} \abs{h_{n,\phi}}^{2} = 0\\
		& \frac{\partial \mathcal{L}}{\partial P_{r,\phi}} = t_{r,\phi} - \lambda_{2} - \lambda_{3} t_{r,\phi}B \frac{\abs{h_{n,\phi}}^{2}}{P_{r,\phi}\abs{h_{n,\phi}}^{2} +1} = 0
	\end{align}
\end{subequations}
The Karush–Kuhn–Tucker (KKT) conditions \cite{cvx} can be obtained as

\begin{subequations}
	\begin{align}
		\beta_{\phi} L- \tau_{m,\phi} B\ln\left(1+P_{n,\phi}\abs{h_{n,\phi}}^{2}  \right)   
		- t_{r,\phi} B\ln \left( 1+ P_{r,\phi}\abs{h_{n,\phi}}^2\right) \leq 0\\
		P_{n,\phi} \abs{h_{n,\phi}}^{2} - P_{m,\phi}\abs{h_{m,\phi}}^2 \left(e^{\frac{L}{B\tau_{m,\phi} }} -1\right)^{-1} + 1	 \leq 0\\
		-P_{n,\phi} \leq 0, -P_{r,\phi} \leq 0\\
		\lambda_{i} \geq 0,\quad i\in \{1,2,3,4\}\\
		\lambda_{1}P_{n,\phi}=0\\
		\lambda_{2}P_{r,\phi}=0\\
		\lambda_{3}\beta_{\phi}L  -\lambda_{3}\tau_{m,\phi} B\ln\left(1+P_{n,\phi}\abs{h_{n,\phi}}^{2}  \right)-\lambda_{3}t_{r,\phi} B\ln \left( 1+ P_{r,\phi}\abs{h_{n,\phi}}^2\right)=0\\
		\lambda_{4} \left( P_{n,\phi} \abs{h_{n,\phi}}^{2} - P_{m,\phi}\abs{h_{m,\phi}}^2 \left(e^{\frac{L}{B\tau_{m,\phi} }} -1\right)^{-1} + 1 \right)=0\\
		\tau_{m,\phi} - \lambda_{1} -\lambda_{3}\tau_{m,\phi} B\frac{\abs{h_{n,\phi}}^2}{P_{n,\phi}\abs{h_{n,\phi}}^2+1}	+\lambda_{4} \abs{h_{n,\phi}}^{2}=0 \label{KKT1_tm} \\
		t_{r,\phi} - \lambda_{2} - \lambda_{3} t_{r,\phi}B \frac{\abs{h_{n,\phi}}^{2}}{P_{r,\phi}\abs{h_{n,\phi}}^{2} +1}
	\end{align}
\end{subequations}
The power allocation schemes can be obtained by different Lagrangian multipliers decisions as follows\\
\begin{itemize}
	\item	Hybrid NOMA: $\lambda_{1}=0$, $\lambda_{2}=0$, and $\lambda_{3} \neq 0$.
	\begin{itemize}
		\item	If $\lambda_{4}=0$: 
		\begin{equation}
			P_{n,\phi}^{*}=P_{r,\phi}^{*}=\abs{h_{n,\phi}}^{-2} \left( e^{\frac{\beta_{\phi} L }{B \left( \tau_{m,\phi}+t_{r,\phi} \right) } } -1\right)
		\end{equation}
		\begin{equation}
			P_{m,\phi}\abs{h_{m,\phi}}^{2} \geq e^{\frac{\beta_{\phi} L }{B \left( \tau_{m,\phi}+t_{r,\phi} \right) } }\left(e^{\frac{L}{B\tau_{m,\phi} }} -1\right)
		\end{equation}
		\item If $\lambda_{4}\neq 0$: 
		\begin{subequations}
			\begin{align}
				& 	P_{n,\phi}^{*} = \abs{h_{n,\phi}}^{-2}\left[P_{m,\phi}\abs{h_{m,\phi}}^{2} \left(e^{\frac{L}{B\tau_{m,\phi}}}-1\right)^{-1}-1  \right],\\
				&	P_{r,\phi}^{*} = \abs{h_{n,\phi}}^{-2} \left\{ e^{\frac{\beta_{\phi} L}{B t_{r,\phi}} - \frac{\tau_{m,\phi}}{t_{r,\phi}} \ln \left[P_{m,\phi}\abs{h_{m,\phi}}^{2}\left( e^{\frac{L}{B\tau_{m,\phi}}}-1 \right)^{-1} \right]  } -1 \right\},						
			\end{align}
		\end{subequations}			
		where $	e^{\frac{\beta_{\phi}L}{B\tau_{m,\phi}}}\left(e^{\frac{L}{B\tau_{m,\phi}}}-1 \right) \geq	P_{m,\phi}\abs{h_{m,\phi}}^{2} \geq e^{\frac{L}{B\tau_{m,\phi} }} -1$.
	\end{itemize}
	\item Pure NOMA: $\lambda_{1}=0$, $\lambda_{2} \neq 0$:
	\begin{subequations}
		\begin{align}
			&	P_{n,\phi} =\abs{h_{n,\phi}}^{-2}\left( e^{\frac{\beta_{\phi}L}{B\tau_{m,\phi}}}-1\right),\\
			&	P_{r,\phi} =0,
		\end{align}
	\end{subequations}
	where $P_{m,\phi}\abs{h_{m,\phi}}^{2} \geq e^{\frac{\beta_{\phi}L}{B\tau_{m,\phi}}}\left(e^{\frac{L}{B\tau_{m,\phi}}}-1 \right)$.
	\item OMA: $\lambda_{1}\neq 0$, $\lambda_{2} = 0$
	\begin{subequations}
		\begin{align}
			&	P_{n,\phi}=0, \\
			&	P_{r,\phi} = \abs{h_{n,\phi}}^{-2}\left( e^{\frac{\beta_{\phi} L }{t_{r,\phi}B}}-1 \right).
		\end{align}
	\end{subequations}
\end{itemize}

\subsection{Proof of Proposition 1}
The total energy consumption can be expressed as:
\begin{equation}
	\begin{split}
		E_{\mathrm{H1}} =&	\frac{\kappa_{0}\left[C(1-\beta_{\phi})L\right]^{3}}{\left(\tau_{m,\phi}+t_{r,\phi}\right)^{2}} + \tau_{m,\phi}\abs{h_{n,\phi}}^{-2}\left[P_{m,\phi}\abs{h_{m,\phi}}^{2} \left(e^{\frac{L}{B\tau_{m,\phi}}}-1\right)^{-1}-1  \right] \\
		&+ t_{r,\phi} \abs{h_{n,\phi}}^{-2} \left\{ e^{\frac{\beta_{\phi} L}{B t_{r,\phi}} - \frac{\tau_{m,\phi}}{t_{r,\phi}} \ln \left[P_{m,\phi}\abs{h_{m,\phi}}^{2}\left( e^{\frac{L}{B\tau_{m,\phi}}}-1 \right)^{-1} \right]  } -1 \right\},
	\end{split}
\end{equation}
where $a_{\phi}=\frac{\beta_{\phi} L -B\tau_{m,\phi}\ln \left[P_{m,\phi}\abs{h_{m,\phi}}^{2}\left( e^{\frac{L}{B\tau_{m,\phi}}}-1 \right)^{-1} \right]}{B}  $.
\begin{equation}
	\begin{split}
		\frac{\partial E_{\mathrm{H1}}}{\partial t_{r,\phi}}=-\frac{2\kappa_{0}\left[C(1-\beta_{\phi})L\right]^{3}}{\left(\tau_{m,\phi}+t_{r,\phi}\right)^{3}} + \abs{h_{n,\phi}}^{-2} \left( e^{\frac{a_{\phi}}{t_{r,\phi}}}- \frac{a_{\phi}}{t_{r,\phi}}e^{\frac{a_{\phi}}{t_{r,\phi}}} -1 \right).
	\end{split}
\end{equation}
Define $g(x) =  e^{\frac{a_{\phi}}{x}}- \frac{a_{\phi}}{x}e^{\frac{a_{\phi,1}}{x}} -1 $,
\begin{equation}
	g'(x)= \frac{a_{\phi,1}^{2}e^{\frac{a_{\phi,1}}{x}}}{x^{3}} \geq 0, \quad \forall x>0.
\end{equation}
Hence, $g(x)$ is monotonically increasing for $ x >0$, and 
$g(t_{r,\phi}) \leq g(\infty)=0 $.\\
Therefore, $\frac{dE_{\mathrm{H1}}}{d t_{r,\phi}} \leq 0 $, which is monotonically decreasing. Hence, the larger $t_{r,\phi} $ is scheduled, the less energy is consumed, and the optimal situation is when $t^{*}_{r,\phi}=\tau_{n,\phi}-\tau_{m,\phi} $.
\par
For the power allocation scheme in \eqref{OptP2}, the energy consumption is given as

\begin{equation}
	\begin{split}
		E_{\mathrm{H2}} =&	\frac{\kappa_{0}\left[C(1-\beta_{\phi})L\right]^{3}}{\left(\tau_{m,\phi}+t_{r,\phi}\right)^{2}} + (\tau_{m,\phi}+t_{r,\phi})\abs{h_{n,\phi}}^{-2} \left( e^{\frac{\beta_{\phi} L }{B \left( \tau_{m,\phi}+t_{r,\phi} \right) } } -1\right).
	\end{split}
\end{equation}
By obtaining the derivative with respect to $t_{r,\phi}$,

\begin{equation}
	\frac{\partial E_{\mathrm{H2}}}{\partial t_{r,\phi}}=-\frac{2\kappa_{0}\left[C(1-\beta_{\phi})L\right]^{3}}{\left(\tau_{m,\phi}+t_{r,\phi}\right)^{3}} + |h_n|^{-2} \left( e^{\frac{\beta_{\phi}L}{B\left(\tau_{m,\phi}+t_{r,\phi}\right)}} -t_{r,\phi} {\frac{\beta_{\phi}L}{B\left(\tau_{m,\phi}+t_{r,\phi}\right)}}e^{\frac{\beta_{\phi}L}{B\left(\tau_{m,\phi}+t_{r,\phi}\right)}} - 1\right).
\end{equation}

Define $g_{2}(x) \triangleq  e^{\frac{\beta_{\phi}L}{B\left(\tau_{m,\phi}+x\right)}} -x {\frac{\beta_{\phi}L}{B\left(\tau_{m,\phi}+x\right)}}e^{\frac{\beta_{\phi}L}{B\left(\tau_{m,\phi}+x\right)}} - 1 $,
and the derivative of $g_{2}(x)$ is

\begin{equation}
	g_2'(x)= \frac{\left(\beta_{\phi}L \right)^2 }{B^2\left(\tau_{m,\phi}+x\right)^3}  e^{\frac{\beta_{\phi}L}{B\left(\tau_{m,\phi}+x\right)}}  \geq 0, \quad \forall x>0.
\end{equation}
Thus, $g_2(x)$ is monotonically increasing for $ x >0$, and 
$g(t_{r,\phi}) \leq g(\infty)=0 $, which indicates $\frac{dE_{\mathrm{H2}}}{d t_{r,\phi}} \leq 0 $. Similar to the previous case, the energy function  is monotonically decreasing with respected to $t_{r,\phi}$, and the optimal time allocation is $t^{*}_{r,\phi}=\tau_{n,\phi}-\tau_{m,\phi} $.

\subsection{Proof to Proposition 2}
\begin{subequations} 
	\begin{align}
		\min_{\beta_{\phi}}&  &&
		\frac{\kappa_{0}\left[C(1-\beta_{\phi})L\right]^{3}}{\left(\tau_{m,\phi}+t^{*}_{r,\phi}\right)^{2}} 
		+ \tau_{m,\phi}P^{*}_{n,\phi} + t^{*}_{r,\phi} P^{*}_{r,\phi}   \\
		\textrm{s.t.}&\quad &&   0\leq \beta_{\phi} \leq 1. \label{P4C}
	\end{align}
\end{subequations}
The Lagrangian is given as

\begin{equation}
	\mathcal{L}(\beta_{\phi},\lambda_{5},\lambda_{6})=\frac{\kappa_{0}\left[C(1-\beta_{\phi})L\right]^{3}}{\left(\tau_{m,\phi}+t^{*}_{r,\phi}\right)^{2}} 
	+ \tau_{m,\phi}P^{*}_{n,\phi} + t^{*}_{r,\phi} P^{*}_{r,\phi} -\lambda_{5}\beta_{\phi}+\lambda_{6}\left(\beta_{\phi}-1\right)
\end{equation}
\begin{itemize}
	
	\item For the case $P_{m,\phi}=P_{n,\phi}$ in \eqref{OptP1}, the stationary condition is obtained as
	\begin{equation}
		\frac{\partial \mathcal{L}}{\partial \beta_{\phi}} = \frac{-3\kappa_{0}\left(CL\right)^{3} \left(1-\beta_{\phi}\right)^{2}}{\tau_{n,\phi}^{2}} + \frac{L}{B} \abs{h_{n,\phi}}^{-2}e^{\frac{\beta_{\phi}L}{B\tau_{n,\phi}}}-\lambda_{5}+\lambda_{6}=0.
	\end{equation}
	Therefore, the KKT conditions can be written as follows:
	\begin{subequations}
		\begin{align}
			-\beta_{\phi}\leq0\\
			\beta_{\phi}-1\leq0\\
			\lambda_{5}\beta_{\phi}=0\\
			\lambda_{6}\left(\beta_{\phi}-1\right)=0\\
			\frac{-3\kappa_{0}\left(CL\right)^{3} \left(1-\beta_{\phi}\right)^{2}}{\tau_{n,\phi}^{2}} + \frac{L}{B} \abs{h_{n,\phi}}^{-2}e^{\frac{\beta_{\phi}L}{B\tau_{n,\phi}}}-\lambda_{5}+\lambda_{6}=0 \label{kkt3_l}
		\end{align}
	\end{subequations}
	For $\beta_{\phi}>0$, $\lambda_{5}=\lambda_{6}=0$, and \eqref{kkt3_l} can be rewritten as
	\begin{equation}
		\frac{3\kappa_{0}\left(CL\right)^{3} \left(1-\beta_{\phi}\right)^{2}}{\tau_{n,\phi}^{2}}=\frac{L}{B} \abs{h_{n,\phi}}^{-2}e^{\frac{\beta_{\phi}L}{B\tau_{n,\phi}}}.
	\end{equation}
	Define $z_{1,\phi} = \frac{3\kappa_{0}BC^{3}L^{2}\abs{h_{n,\phi}}^{2}}{\tau_{n,\phi}^{2}}$, $z_{2,\phi} = \frac{L}{B\tau_{n,\phi}}$, and $b_{\phi} = \left(1-\beta_{\phi}\right)$, the optimal task assignment coefficient can be derived as
	\begin{equation}
		z_{1,\phi} b_{\phi}^{2}= e^{z_{2,\phi} \left(1-b_{\phi}\right)},
	\end{equation}
	\begin{equation}
		b_{\phi}=\frac{2}{z_{2,\phi}}\mathcal{W}\left(\frac{1}{2}z_{1,\phi}^{-\frac{1}{2}}z_{2,\phi}e^{\frac{z_{2,\phi}}{2}}\right).
	\end{equation}
The optimal task assignment ratio can be expressed as
	\begin{equation}
		\beta_{\phi}^{*}=1-b=1-\frac{2}{z_{2,\phi}}\mathcal{W}\left(\frac{1}{2}z_{1,\phi}^{-\frac{1}{2}}z_{2,\phi}e^{\frac{z_{2,\phi}}{2}}\right).
	\end{equation}
	
	\item For the case $P_{m,\phi} \neq P_{n,\phi}$ in \eqref{OptP2}:
	
	The stationary condition can be expressed as
	
	\begin{equation}
		\frac{\partial \mathcal{L}}{\partial \beta_{\phi}} = \frac{-3\kappa_{0}\left(CL\right)^{3} \left(1-\beta_{\phi}\right)^{2}}{\tau_{n,\phi}^{2}} + \abs{h_{n,\phi}}^{-2}\frac{L}{B}e^{-u}e^{\frac{\beta_{\phi}L}{B\left(\tau_{n,\phi}-\tau_{m,\phi}\right)}-u} -\lambda_{5}+\lambda_{6}=0,
	\end{equation}
	where $u_{\phi}= \frac{\tau_{m,\phi}}{\left(\tau_{n,\phi}-\tau_{m,\phi}\right)} \ln \left[P_{m,\phi}\abs{h_{m,\phi}}^{2}\left( e^{\frac{L}{B\tau_{m,\phi}}}-1 \right)^{-1} \right]$.
	
	\begin{subequations}
		\begin{align}
			-\beta_{\phi}\leq0\\
			\beta_{\phi}-1\leq0\\
			\lambda_{5}\beta_{\phi}=0\\
			\lambda_{6}\left(\beta_{\phi}-1\right)=0\\
			\frac{-3\kappa_{0}\left(CL\right)^{3} \left(1-\beta_{\phi}\right)^{2}}{\tau_{n,\phi}^{2}} + \abs{h_{n,\phi}}^{-2}\frac{L}{B}e^{-u_{\phi}}e^{\frac{\beta_{\phi}L}{B\left(\tau_{n,\phi}-\tau_{m,\phi}\right)}-u_{\phi}} -\lambda_{5}+\lambda_{6}=0 \label{TA2:l}
		\end{align}
	\end{subequations}
	For $\beta_{\phi}>0$, $\lambda_{5}=\lambda_{6}=0$, constraint $\eqref{TA2:l}$ can be rearranged as
	\begin{equation}
		\frac{3\kappa_{0}\left(CL\right)^{3} \left(1-\beta_{\phi}\right)^{2}}{\tau_{n,\phi}^{2}} = \abs{h_{n,\phi}}^{-2}\frac{L}{B}e^{-u_{\phi}}e^{\frac{\beta_{\phi}L}{B\left(\tau_{n,\phi}-\tau_{m,\phi}\right)}-u_{\phi}} ,
	\end{equation}
	\begin{equation}
		\frac{3\kappa_{0}B\abs{h_{n,\phi}}^{2}\left(CL\right)^{3}e^{2u_{\phi}} \left(1-\beta_{\phi}\right)^{2}}{\tau_{n,\phi}^{2}L}=e^{\frac{\beta_{\phi}L}{B\left(\tau_{n,\phi}-\tau_{m,\phi}\right)}}.
	\end{equation}
	Define $z_{1,\phi}=\frac{3\kappa_{0}B\abs{h_{n,\phi}}^{2}C^{3}L^{2}e^{2u_{\phi}} }{\tau_{n,\phi}^{2}} $, $z_{2,\phi}=\frac{L}{B\left(\tau_{n,\phi}-\tau_{m,\phi}\right)} $, the above equation can be rewritten as
	\begin{equation}
		z_{1,\phi}b_{\phi}^{2}=e^{z_{2,\phi}\left(1-b_{\phi}\right)},
	\end{equation}
	\begin{equation}
		b_{\phi}=\frac{2}{z_{2,\phi}}\mathcal{W}\left(\frac{1}{2}z_{1,\phi}^{-\frac{1}{2}}z_{2,\phi}e^{\frac{z_{2,\phi}}{2}}\right),
	\end{equation}
Hence the optimal task partition assignment ratio is:
	\begin{equation}
		\beta_{\phi}^{*}=1-b_{\phi}=1-\frac{2}{z_{2,\phi}}\mathcal{W}\left(\frac{1}{2}z_{1,\phi}^{-\frac{1}{2}}z_{2,\phi}e^{\frac{z_{2,\phi}}{2}}\right).
	\end{equation}	
\end{itemize}

\bibliographystyle{IEEEtran}
\bibliography{mybib}

\end{document}